\newcommand{\R}{ \mathbb{R}}
\newcommand{\dn}{ \mathbf{d}}
\newcommand{\zero}{\mathbf{0}}
\newenvironment{proof}{\textbf{Proof}:}{\hfill$\square$}
\newtheorem{theorem}{Theorem}
\newtheorem{definition}{Definition}
\newtheorem{remark}{Remark}
\newtheorem{corollary}{Corollary}
\newcommand{\Yu}{\color{orange}}
\newcommand{\eg}{\textit{e.g.}}
\newcommand{\ie}{\textit{i.e.}}
\begin{document}

\begin{frontmatter}

\title{Generalized Homogeneous Rigid-Body Attitude Control} 

\thanks[footnoteinfo]{This paper was presented at European Control Conference 2022. Corresponding author Yu Zhou.}

\author[Inria]{Yu Zhou}\ead{yu.zhou@inria.fr},    
\author[Inria]{Andrey Polyakov}\ead{andrey.polyakov@inria.fr},               
\author[Inria]{Gang Zheng}\ead{gang.zheng@inria.fr}  

\address[Inria]{Inria, Univ. Lille, CNRS, Centale Lille, France}  

\begin{keyword}                           
homogeneous control; attitude tracking; impulsive system.               
\end{keyword}                             

\begin{abstract}                          
The attitude tracking problem for a full-actuated rigid body in 3D is studied using a 
system model based on Lie algebra $\mathfrak{so}(3)$. 
A nonlinear homogeneous  controller is designed to track globally a smooth attitude trajectory in a finite or a (nearly) fixed time. A global settling time estimate is obtained,  which is easily adjustable by tuning the homogeneity degree. The  input-to-state stability of the control system with respect to measurement noises and additive perturbations is studied. Simulations illustrating the performance of the proposed algorithm are presented.
\end{abstract}

\end{frontmatter}

\section{Introduction}
The problem of a rigid body attitude control is well-known in a variety of applications such as satellites, robots, and aerial/underwater vehicles.
There are three commonly-used global mathematical descriptions  of the rigid body's attitude (orientation): Special Orthogonal  Group  $SO(3)$, quaternion,   Lie algebra $\mathfrak{so}(3)$ (also known as exponential coordinates or  Chevalley's coordinates, see  \cite{chaturvedi2011rigid}).
Indeed, the manifold $SO(3)$ consists of all possible rotation matrices in 3D space. 
 The quaternion also represent globally an orientation of a rigid body but with an ambiguity, which may cause an unwinding phenomenon in attitude control \cite{wen1991attitude}. 
Lie algebra $\mathfrak{so}(3)$ is a minimal (three parameters) representation which contains ambiguities, as it was stated in \cite{stuelpnagel1964parametrization} that any representation with three parameters cannot uniquely represent the global orientation of a rigid body.

Based on the above mentioned three representations, different attitude control techniques have been developed in the literature.
Notice that  a topological structure of the group precludes the existence of continuous time-invariant feedback control for a global asymptotic attitude stabilization \citet{bullo1995proportional,chaturvedi2011rigid,mayhew2013synergistic}.  Several switching-based  algorithms of a global asymptotic/exponential attitude control design have been developed for the system described by quaternions (see, \eg,  \cite{mayhew2011quaternion}, \cite{su2011globally}); 
 $SO(3)$ group (see, \eg, \citet{mayhew2013synergistic}, \cite{lee2015global},  \cite{berkane2017hybrid}) and  
 the Lie algebra $\mathfrak{so}(3)$ (see, \eg, \citet{bharadwaj1998geometry} and \citet{yu2016global}). 

A non-asymptotic (\text{i.e.} finite/fixed-time) attitude control of rigid body has also been intensively studied in the last decade.
 However, due to the topological constraint, non-asymptotic  stabilization and robustness analysis on the manifold $SO(3)$ are not trivial. Thus, most of non-asymptotic attitude control algorithms are currently based on quaternion and Lie algebra $\mathfrak{so}(3)$. 
  There are two approaches to non-asymptotic attitude control design: direct Lyapunov methods (see, \eg, \cite{du2011finite},  \cite{zou2013finite}, \cite{Bing_Acta_2018}, \cite{Chen_TAES_2018},  \cite{zou2020fixed}) and homogeneity-based methods.
 The direct Lyapunov methods provide an estimate of the settling time but it involves a complicated Lyapunov construction. Homogeneity-based methods simplify the design of finite-time control because an asymptotically stable homogeneous system with a negative degree is finite-time stable.
  Several attitude controllers inspired by the ideas of  homogeneity can be found in the literature.  They provide local 
(\cite{du2011finite}, \cite{zou2013finite}, \cite{guo2019robust}, \cite{gui2015simple}) and 
a global (\cite{du2012finite}, \cite{gui2016global}, \cite{shi2018global}) stabilization, and combined with 
sliding mode algorithms in  (\cite{zhu2011attitude}, \cite{tiwari2015rigid}, \citet{shi2017almost}, \cite{guo2019robust}).
	
	The mentioned homogeneity-based finite-time global attitude control design follows the idea of combining a negative-degree locally homogeneous feedback with globally asymptotically stabilizing algorithm \cite{hong2001output}. The main drawbacks of this approach are absence of the settling time estimate; sophisticated  Lyapunov analysis; and impossibility of the fixed-time stabilzation. 
	The aforementioned observation motivates us to develop a new homogeneity-based attitude control.

 Continuous homogeneous control systems   in $\R^n$ have the following advantages:
 finite-time and fixed-time convergence rates,  \cite{BhatBernstein2005:MCSS}, \cite{Andrieu_etal2008:SIAM_JCO};
	 local stability is equivalent to  global one even in a nonlinear case, \cite{Zubov1958:IVM}, \cite{Rosier1992:SCL};
	 homogeneity ensures robustness with respect to a larger class of uncertainties \cite{Hong2001:Aut}, \cite{Andrieu_etal2008:SIAM_JCO};
	 an elimination of an unbounded peaking effect, \cite[Chapter 1]{Polyakov2020:Book} and a reduction of overshoots (\cite{PolyakovKrstic2023:TAC}).

	The aim of this paper is to develop a homogeneous attitude control on Lie algebra. However, the global model on Lie algebra $\mathfrak{so}(3)$ is a specific \textit{impulsive} system ( \citet{bharadwaj1998geometry}). The jump behavior may destroy the above mentioned advantages. There are few studies on impulsive homogeneous system and its stability (\eg, \cite{TunaTeel2006:CDC}, \cite{Goebel_CDC2008},  \cite{forni2010stability}), but they require the scaling invariance of jump set (\ie, if the state in the jump set, then arbitrarily scaled state must remain in the jump set). This condition is not satisfied in the case of attitude control.
	This motivates us to study a specific class of impulsive system that admit an analysis based on the homogeneous implicit Lyapunov function method \cite{Polyakov_etal2015:Aut}.  To the best of authors' knowledge, no research has been done on this class of systems before.

For homogeneous control design, the implicit Lyapunov-function-based methods \cite{AdamyFlemming2004:Aut}, \cite{Korobov1979:DAN} are well-developed for  robust homogeneous  finite/fixed-time stabilization of  continuous-time linear control systems  (\cite{Polyakov_etal2015:Aut}, \cite{Polyakov2020:Book}, \cite{Zimenko_etal2020:TAC}), 
this paper extends the mentioned control technique to  a nonlinear  ( impulsive) system and applies it to an attitude control design.  
The main contributions of our work are as follows.
\begin{itemize}
	\item  The \textit{homogeneous} implicit Lyapunov function method is extended to a specific class of  impulsive systems such that 
		various convergence rates (finite-time, exponential or  nearly fixed-time) can be assessed by a proper selection of the  homogeneity degree. An estimate of a convergence time is given.

	\item A homogeneous attitude control for global attitude tracking  is developed using the implicit Lyapunov function technique. The design procedure  repeats the linear control case but allows the tracking error to converge to zero in a finite/fixed time  or exponentially dependently of the homogeneity degree. The convergence time is estimated.  The homogeneous algorithm becomes the classical linear PD controller (see \eg, \cite{bullo1995proportional}, \cite{yu2016global}) on exponential coordinates in a particular case.
	\item  Globally fixed-time stable attitude controller is designed  by a commutation of two homogeneous controllers as in \cite{Polyakov_etal2015:Aut}. Robustness ( Input-to-State Stability)  of the  closed-loop  system with respect to various perturbations is studied.
\end{itemize}



A preliminary version of this work has been presented at European Control Conference,  \cite{YuECC2022},  where local homogeneous attitude control has been developed. The present paper extends this local result to a global homogeneous attitude controller. The paper is organized as follows. The problem of attitude tracking is stated in Section \ref{sec:prob}.  In Section \ref{sec:homo}, some preliminaries about generalized  homogeneous stabilization and new results about homogeneous stabilization of linear  impulsive systems are given. 
A homogeneous attitude tracking control is design in Section \ref{sec:resu}. The performance and robustness of the developed control algorithm are demonstrated in Section \ref{sec:simu} on numerical simulations. Some properties of $SO(3)$ group and $\mathfrak{so}(3)$ algebra are given in appendix.
\section*{Notation}\vspace{-2mm}
$\mathbb{R}$ is the field of reals, $\mathbb{R}_{+}=\{x \in \mathbb{R}: x \geq 0\}$;    $|\cdot|$ is the  Euclidean  norm in $\mathbb{R}^{n}$ ; $\mathbf{0}$ denotes the zero element of a vector space; $\operatorname{diag}\left\{\lambda_{i}\right\}_{i=1}^{n}$ is the diagonal matrix with elements $\lambda_{i} ; P \succ 0(\prec 0, \succeq 0, \preceq 0)$ for $P \in \mathbb{R}^{n \times n}$ means that the matrix $P$ is symmetric and positive (negative) definite (semi-definite); $C(X, Y)$ denotes the space of continuous functions $X \mapsto Y$, where $X, Y$ are subsets of normed vector spaces; $C^{p}(X, Y)$ is the space of functions continuously differentiable at least up to the order $p$; $\lambda_{\min }(P)$ and $\lambda_{\max }(P)$ represent the minimal and maximal eigenvalue of a matrix $P$; for $P \succeq 0$ the square root of $P$ is a matrix $M=P^{\frac{1}{2}}$ such that $M^{2}=P$; 
$SO(3)\subset \R^{3\times 3}$ is a special orthogonal group (see Appendix)
and $\mathfrak{so}(3)$ is the corresponding Lie algebra consisting of the skew-symmetric $3\times 3$ matrices;
	the mapping  $(\cdot)^{\wedge}: \mathbb{R}^3 \mapsto \mathfrak{so}(3)$ is given by
	$\scriptsize x^{\wedge}\!=\left[\begin{smallmatrix}
		0 & -x_3 & x_2\\
		x_3 & 0 & -x_1\\
		-x_2& x_1& 0
	\end{smallmatrix}\right],$  where $x\!=\![x_1 , x_2 , x_3]^{\top}\in \R^3,$
	and $(\cdot)^{\vee}: \mathfrak{so}(3) \mapsto \mathbb{R}^3$ is the inverse mapping to $(\cdot)^{\wedge}$; $\mathrm{Exp}(\cdot)=\mathrm{exp}((\cdot)^{\wedge})$ defines a mapping from $\mathbb{R}^3$ to $SO(3)$, $\mathrm{Log}(\cdot)=(\mathrm{log}(\cdot))^{\vee}$ defines a mapping from $SO(3)$ to $\mathbb{R}^3$, where $\mathrm{exp}$ and $\mathrm{log}$ are the matrix exponent and the matrix logarithm, respectively;
	$\mathcal{K}$ denotes a class of continuous strictly increasing positive definite functions $\mathbb{R}_+\mapsto\mathbb{R}_+$ with $\alpha(0)=0$. $\mathcal{KL}$ denotes a continuous function $\beta:\mathbb{R}_+\times\mathbb{R}_+\mapsto\mathbb{R}_+$, if, for each fixed s, $\beta(\cdot,s)\in\mathcal{K}$ and for each fixed $r$, $\beta(r,\cdot)$ is strictly decreasing to zero.  
	$\mathbf{B}^n[r]=\left\{x\in\mathbb{R}^n:|x|\le r\right\}$ and $\mathbf{B}^n(r)=\left\{x\in\mathbb{R}^n:|x|< r\right\}$ denote the closed ball and open ball of the radius $r$ on $\mathbb{R}^n$.  $\partial \mathcal{C}$ and $\overline{\mathcal{C}}$ denote the boundary and closure of an open set $\mathcal{C}\in \R^n$. $L^\infty$ denotes the function space that its elements are the essentially bounded measurable function. For $f\in L^{\infty}$, $\|f\|_{\infty}=\inf \left\{c \in \mathbb{R}_{\geq 0}:|f(x)| \leq c \text { for almost every } x\right\}$.

\section{Problem statement}
\label{sec:prob}
The dynamics of a  fully-actuated rigid body rotation motion is governed by the nonlinear differential equation (see, e.g. \citet{bharadwaj1998geometry}):
\begin{subequations}\label{eq:dyn}
	\begin{align}
		\dot{R} &= R\omega^{\wedge}, \quad t>0, \quad R(0)=R_0, \label{eq:dyn1}\\
		\dot{\omega}&=J^{-1}\left( -\omega\times J\omega + M\right), \quad \omega(0)=\omega_0,
		\label{eq:dyn2}
	\end{align}
\end{subequations}
where $R(t)\in SO(3)$ is the rotation matrix with respect to the inertial frame, $\omega\in\mathbb{R}^3$ is the angular velocity in body frame, $J\in\mathbb{R}^{3\times3}$ is the inertia matrix with respect to rigid body's mass center, $M(t)\in\mathbb{R}^3$ is the torque expressed in the body frame  (control input). 
 
 This paper investigates how to design a feedback controller $M$ that can ensure a \textit{global uniform finite/fixed-time attitude tracking}, i.e.,
$$
\forall R_0 \!\in\! S O(3), \forall \omega_0 \in \mathbb{R}^3, \exists T \!\in\! \mathbb{R}_{+}: R(t)\!=\!R_{\mathrm{d}}(t), \forall t \!\geq\! T,
$$
 where $T$ is independent of $R_0$, $\omega_0$ in the fixed-time case.
For this purpose, this paper adopts the so-called implicit generalized homogeneous algorithms (\cite{Zimenko_etal2020:TAC}) to treat the global finite/fixed time attitude trajectory tracking problem, since they can ensure faster (finite/fixed-time) convergence, better robustness, less overshoots and a simple upgrade of an exists controller to a homogeneous one.

\section{Homogeneous Stabilization}
\label{sec:homo}

In this section, we briefly recall some fundamental concepts of generalized homogeneous control design.

\subsection{Linear dilations in $\R^n$}

By definition, the homogeneity is a dilation symmetry \cite{Zubov1958:IVM}, \cite{Khomenuk1961:IVM}, \cite{Kawski1991:ACDS}, \cite{FischerRuzhansky2016:Book}, \cite{Polyakov2020:Book}, and the dilation (see \cite{Husch1970:Math_Ann,Kawski1991:ACDS}) in a normed vector space is a one-parameter group $\dn(s), s\in \R$ of 
transformations satisfying the limit property $\lim_{s\to \pm\infty}\|\dn(s)x\|=e^{\pm\infty}, \forall x\neq \zero$,  where $\|\cdot\|$ is a norm in $\R^n$. In the general case, the dilation may not be a Lie (continuous) group. 
Examples of  continuous dilations  in $\mathbb{R}^{n}$ are 
\begin{itemize}
\item  Uniform dilation (L. Euler, 18th century):
$
\mathbf{d}(s)=e^{s} I,
$
where $I$ is the identity matrix $\mathbb{R}^{n}$;
\item Weighted dilation (\cite{Zubov1958:IVM}) :
$
\mathbf{d}(s)\!=\!\left[\begin{smallmatrix}\! e^{r_{1} s} & \!...\! & 0\!\\ \!... &\! ...\! & ... \!\\ \!0 & \!...\! & e^{r_{n}\! s}\end{smallmatrix}\right],
$
where $r_{i}>0, i=1,2, \ldots, n$.

\item Geometric dilation (
\cite{Kawski1991:ACDS}) is a flow generated by unstable $C^1$ vector field in $\R^n$.
\end{itemize}
In this paper we deal only with the linear (geometric) dilation in $\R^n$ which is defined as follows \vspace{-2mm}
\begin{equation}\label{eq:dilation}
 \dn(s)=e^{sG_{\dn}}:=\sum_{i=0}^{+\infty}\tfrac{s^iG_{\dn}^i}{i!}, \quad s\in \R,\vspace{-2mm}
\end{equation}
where $G_{\dn}\in \R^{n\times n}$ is an anti-Hurwitz\footnote{A matrix $G_{\dn}\in \R^{n\times n}$ is anti-Hurwitx if $-G_{\dn}$ is  Hurwitz.} matrix being the generator of the dilation $\dn$. A dilation $\dn$ is monotone
if $s\mapsto\|\dn(s)x\|$ is a strictly increasing function for any $x\neq0$.
 It is worth noting that monotonicity of the dilation may depend of the norm $\|\cdot\|$ in $\R^n$. Also, any linear dilation in $\R^n$ is monotone \cite{Polyakov2020:Book} provided that the norm in $\R^n$ is defined as follows\vspace{-2mm}
	\begin{equation}\label{eq:monoticity}
		\|x\|=\sqrt{x^{\top} P x}, \quad PG_{\dn}+G_{\dn}^{\top}P\succ 0, P\succ 0.\vspace{-2mm}
\end{equation}
The linear dilation introduces an alternative norm topology in $\R^n$ by means of a homogeneous norm (see, e.g., \cite{Grune2000:SIAM_JCO} for an example of a homogeneous norm induced by the weighted dilation).

\begin{definition}
	The scalar-valued function $\|\cdot\|_\dn:\mathbb{R}\mapsto [0,\infty)$ defined as $\|\boldsymbol{0}\|_\dn=0$ and \vspace{-2mm}
	$$
	\|u\|_\dn=e^{s_u}, \text{where} \  s_u\in\mathbb{R}:\|\dn(-s_u)x\|=1,\vspace{-2mm}
	$$
	is called the canonical $\dn$-homogeneous norm in $\mathbb{R}^n$, where $\dn$ is a monotone dilation in $\mathbb{R}^n$.
\end{definition}

For any linear monotone dilation in $\mathbb{R}^n$, the canonical homogeneous norm is continuous on $\mathbb{R}^n$ and locally Lipschitz continuous on $\mathbb{R}^n \backslash\{0\}$. Moreover, it is differentiable on $\mathbb{R}^n \backslash\{\boldsymbol{0}\}$ :\vspace{-2mm}
\begin{equation}\label{eq:c_h_n_der}
	\tfrac{\partial\|x\|_\dn}{\partial x}=\|x\|_{\dn} \tfrac{x^{\top} \dn^{\top}\left(-\ln \|x\|_{\dn}\right) P \dn\left(-\ln \|x\|_{\dn}\right)}{x^{\top} \dn^{\top}\left(-\ln \|x\|_{\dn}\right) P G_{\dn} \dn\left(-\ln \|x\|_{\dn}\right) x}\vspace{-2mm}
\end{equation}
 provided that the canonical homogeneous norm is induced by the  Euclidean norm \eqref{eq:monoticity}.
	Below we use the canonical homogeneous norm as an implicit Lyapunov function for an analysis  impulsive locally homogeneous systems and define the $\dn$-homogeneous projector on the unit sphere \cite[p. 159]{Polyakov2020:Book} as follows\vspace{-2mm}
\begin{equation}
	\pi_{\dn}(x)=\dn(-\ln \|x\|_{\dn})x, \quad x\neq \zero.\vspace{-2mm}
\end{equation} 
Indeed, for any $x\neq \zero$, by definition of the canonical homogeneous norm, we have $\|\pi_{\dn}(x)\|=1$.
\subsection{Homogeneous Systems}

\begin{definition}\cite{Kawski1991:ACDS}
A  vector field $f:\R^n\to \R^n$ (resp. a function $h:\R^n \to \R$) is said to be $\dn$-homogeneous of degree $\mu\in \R$ if\vspace{-2mm}
\[
f(\dn(s)x)=e^{\mu s} \dn(s) f(x), (\text{resp. } h(\dn(s)x)=e^{\mu s} h(x)),\vspace{-2mm}
\]
$\forall x\in\R^n, \quad \forall s\in \R$, where $\dn$ is a linear dilation in $\R^n$.
\end{definition}

Any $\dn$-homogeneous system\vspace{-2mm}
\begin{equation}\label{eq:f(x)}
 \dot x=f(x), \quad t>0, \quad x(0)=x_0\in \R^n\vspace{-2mm}
\end{equation}
is diffeomorphic on $\mathbb{R}^n \backslash\{\boldsymbol{0}\}$ and homeomorphic on $\R^n$ to a standard homogeneous system (\cite{Polyakov2020:Book}). This means that many important results known for standard and weighted homogeneous systems hold for linear homogeneous systems as well. An important feature of a stable homogeneous system is a dependence of the convergence rate on the homogeneity degree. The following result is the straightforward corollary of the Zubov-Rosier Theorem (\cite{Zubov1958:IVM}, \cite{Rosier1992:SCL}).

\begin{theorem}\label{thm:exist_hom_LF}
       Let a continuous vector field $f: \mathbb{R}^n \mapsto \mathbb{R}^n$ be a d-homogeneous of degree $\mu \in \mathbb{R}$. The system \eqref{eq:f(x)} is globally uniformly asymptotically stable if and only if there exists a positive definite $\dn$-homogeneous function $V: \mathbb{R}^n \mapsto[0,+\infty)$ such that $V \in C\left(\mathbb{R}^n\right) \cap C^1\left(\mathbb{R}^n \backslash\{\zero\}\right)$,
      $
      \dot V(x)\leq -\rho V^{1+\mu}(x), \quad \forall x\neq 0,
      $
       $\rho>0$. Moreover, the  system \eqref{eq:f(x)} is 
       globally uniformly finite-time stable\footnote{The system \eqref{eq:f(x)} is finite-time stable it is Lyapunov stable and $\exists T(x_0): \|x(t)\|=0, \forall t\geq T(x_0), \forall x_0\in \R^n$.} for $\mu<0$;
      	 globally uniformly exponentially stable for $\mu=0$; 
      	 globally uniformly nearly fixed-time stable\footnote{The system \eqref{eq:f(x)} is uniformly nearly fixed-time stable it is Lyapunov stable and $\forall r>0, \exists T_r>0: \|x(t)\|<r, \forall t\geq T_r$ independently of $x_0\in \R^n$.} for $\mu\!>\!0$.   
 \end{theorem}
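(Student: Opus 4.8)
The plan is to split the statement into its qualitative part---the equivalence between global uniform asymptotic stability and the existence of a homogeneous strict Lyapunov function, which is exactly the homogeneous reading of the Zubov--Rosier converse theorem---and its quantitative part, the three convergence regimes. The genuine work lies in two places: producing a Lyapunov function whose $\dn$-homogeneity degree is \emph{exactly} $1$, which is what makes the clean exponent $1+\mu$ appear in $\dot V\le-\rho V^{1+\mu}$, and integrating the resulting scalar differential inequality to extract the rate as a function of $\mu$.

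For the sufficiency direction I would assume a positive definite $\dn$-homogeneous $V\in C(\R^n)\cap C^1(\R^n\setminus\{\zero\})$ with $\dot V(x)\le-\rho V^{1+\mu}(x)<0$ for $x\neq\zero$. This is a strict Lyapunov function, and homogeneity turns it into a coercive one: writing $x=\dn(\ln\|x\|_\dn)z$ with $\|z\|=1$ gives $a\|x\|_\dn\le V(x)\le b\|x\|_\dn$ for the positive bounds $a,b$ of $V$ on the Euclidean unit sphere, while the dilation limit property forces $\|x\|_\dn\to\infty$ as $\|x\|\to\infty$, so $V$ is radially unbounded. The classical Lyapunov theorem then yields global asymptotic stability, and time-invariance of $f$ upgrades it to the uniform version.

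For the necessity direction and, above all, for the precise exponent, I would invoke Zubov--Rosier to obtain a smooth $\dn$-homogeneous Lyapunov function $\tilde V$ of some degree $m>0$ and then set $V:=\tilde V^{1/m}$, which is positive definite, $C^1$ off the origin, and $\dn$-homogeneous of degree $1$. The key is a degree count: differentiating $V(\dn(s)x)=e^{s}V(x)$ gives $\nabla V(\dn(s)x)=e^{s}\dn(s)^{-\top}\nabla V(x)$, and combining with $f(\dn(s)x)=e^{\mu s}\dn(s)f(x)$ shows that $\dot V=\langle\nabla V,f\rangle$ is $\dn$-homogeneous of degree $1+\mu$. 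Global asymptotic stability gives $\dot V<0$ off the origin, hence on the compact homogeneous sphere $S=\{x:V(x)=1\}$ it attains a maximum $-\rho<0$; writing an arbitrary $x\neq\zero$ as $x=\dn(\ln V(x))y$ with $y\in S$ and using the degree of $\dot V$ propagates the bound to $\dot V(x)\le-\rho V^{1+\mu}(x)$ everywhere. This normalization to degree $1$ is the step I expect to be the main obstacle: it is precisely the match $1\cdot(1+\mu)=1+\mu$ that legitimizes the power $1+\mu$, and it fails for an arbitrarily scaled Lyapunov function.

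Finally, the convergence rates follow from the comparison principle applied to $v(t):=V(x(t))$, which obeys $\dot v\le-\rho v^{1+\mu}$. Integrating the companion equation $\dot y=-\rho y^{1+\mu}$ gives, for $\mu\neq0$, the bound $v(t)\le\bigl(v(0)^{-\mu}+\mu\rho t\bigr)^{-1/\mu}$ as long as the right-hand side stays positive. For $\mu<0$ this vanishes at the finite instant $T(x_0)=V(x_0)^{-\mu}/(-\mu\rho)$, so the system is uniformly finite-time stable; for $\mu=0$ the inequality reads $\dot v\le-\rho v$, whence $v(t)\le v(0)e^{-\rho t}$, and equivalence of the degree-$1$ function $V$ with $\|\cdot\|_\dn$ together with the standard polynomial bounds relating $\|\cdot\|_\dn$ and $\|\cdot\|$ yields exponential stability; for $\mu>0$ the weaker but initial-condition-free estimate $v(t)\le(\mu\rho t)^{-1/\mu}$ shows that every ball of radius $r>0$ is entered after a time $T_r$ independent of $x_0$, i.e.\ nearly fixed-time stability. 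The only remaining care is to convert the decay of $V$ (equivalently of $\|x\|_\dn$) into the Euclidean statements of the definitions, which is exactly what the monotone-dilation norm bounds provide.
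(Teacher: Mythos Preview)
The paper does not actually prove this theorem: it is stated as ``a straightforward corollary of the Zubov--Rosier Theorem'' with citations to \cite{Zubov1958:IVM} and \cite{Rosier1992:SCL}, and no argument is given. Your proposal is therefore not competing with a proof in the paper but rather supplying the details the paper omits.

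Your argument is correct and is precisely the natural unpacking of the Zubov--Rosier citation in the generalized-dilation setting. The two points you single out as the real work---normalizing the Rosier Lyapunov function to $\dn$-homogeneity degree exactly $1$ so that $\dot V$ has degree $1+\mu$, and then integrating the scalar inequality $\dot v\le -\rho v^{1+\mu}$---are exactly what is needed, and your degree computation $\dot V(\dn(s)x)=e^{(1+\mu)s}\dot V(x)$ via $\nabla V(\dn(s)x)=e^{s}\dn(s)^{-\top}\nabla V(x)$ and $f(\dn(s)x)=e^{\mu s}\dn(s)f(x)$ is clean. The only remark is that Rosier's original construction is stated for weighted dilations; the extension to a general linear dilation $\dn(s)=e^{sG_\dn}$ goes through the diffeomorphism to a standard homogeneous system mentioned just before the theorem (the reference to \cite{Polyakov2018:RNC}), so you may want to make that reduction explicit rather than invoking Rosier directly.
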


\section{Main results}
\label{sec:resu}
In  this section we design a global homogeneous attitude control. First, 
we study the problem of  homogeneous stabilization of a class of impulsive systems.
Then we apply the obtained results to design global homogeneous controller for an attitude control of rigid body, based on the fact that 
 system \eqref{eq:dyn} can be re-formulated as a impulsive system on Lie algebra   $\mathfrak{so}(3)$.

\subsection{ Homogeneous Lyapunov function for  Impulsive Nonlinear Systems}
\label{subsec:lis}
Let us consider the following impulsive system \vspace{-2mm}
\begin{equation}\label{eq:hybrid_linear}
\left\{
	\begin{aligned}
		\dot x
		&=	 f(t, x),\quad \;\;\;\,x^-\in\mathcal{C},\\
		x &=	\Pi 	x^-,\quad \;\;\;\ \ \ x^-\in\mathcal{D},\\
	\end{aligned}
	\right.\vspace{-2mm}
\end{equation}
where $t>t_0$, $x\in\mathbb{R}^n$,  $f \in C(\R^{n+1} ,\R^{n})$, 
the matrix $\Pi\in \R^{n\times n}$ defines a linear reset map, 
$ 
x^-=\lim_{h\to 0^-}x(t+h),
$ \vspace{-2mm}
\begin{equation}\label{eq:CD}
	\mathcal{C}\!=\!\{x\!\in\! \R^n\!:\! g_{1}(x)\!<\!0\}, \mathcal{D}\!=\!\left\{x\!\in\! \R^n\!:\! \begin{smallmatrix}g_{1}(x)=0,\\ g_{2}(x)>0\end{smallmatrix}\right\}\vspace{-2mm}
\end{equation}
 are subsets of $\R^n$, $g_1,g_2\in C^1(\R^n,\R)$. Inspired by \cite{Hespanha_etal2008:Aut}, a solution of the system is a right-continuous function $x$ satisfying the differential equation   almost everywhere, belonging to $\mathcal{C}$ almost everywhere  and  having left limits at each $t>t_0$. 
The following theorem provides a sufficient condition for existence of solutions and  a finite/fixed-time stability analysis of the impulsive system \eqref{eq:hybrid_linear}  by means of implicit homogeneous Lyapunov function designed for the impulse-free system.

\begin{theorem}\label{thm:hybrid_hom_stabilization}
	Let $C$  given by \eqref{eq:CD} be a open connected set such that $\boldsymbol{0}\in\mathcal{C}$,    $\Pi \mathcal{D}\subset\partial C \backslash \overline{D}$,   
		\begin{equation}\label{eq:cond1_thm}
		\tfrac{\partial g_1}{\partial x}f(t,x)<0, \quad \forall x\in \partial C\backslash\overline{\mathcal{D}} , \quad \forall t\in \R	
		\end{equation}
		and for any $r>0$ there exists $\gamma>0$ such that 
		\begin{equation} \label{eq:cond2_thm}
				\tfrac{\partial g_1}{\partial x}f(t,x)\!\leq\! 0,	\; \tfrac{\partial g_2}{\partial x}f(t,x)\!\leq\! -\gamma, \; \forall x\!\in\! \partial D\cap B^n[r], \; \forall t\!\in\! \R.
		\end{equation}	
	 Let the canonical homogeneous norm  induced by the weighted Euclidean norm \eqref{eq:monoticity}  be a local Lyapunov function $V(x)=\|x\|_{\dn}$ of the impulse-free system  $\dot{x} = f(t,x)$ satisfying \vspace{-2mm}
	 \begin{equation} \label{eq:condition_1}
	 \frac{\partial V}{\partial x}f(t,x)\leq -\rho V^{1+\mu}(x),\quad  \forall x\in \mathcal{C}\backslash\{\zero\}, \forall t\in \R,\vspace{-2mm}
	 \end{equation}
	 where $ \mu \in \R$ and $\rho>0$. If \vspace{-2mm}
	 \begin{equation}
		G_\dn\Pi=\Pi G_\dn,  \label{eq:hybrid_condition_a}\vspace{-2mm}
		\end{equation}
	\begin{equation}
		y^{\top}\Pi^{\top}P\Pi y\le 1, \ \forall y \in  \pi_{\dn}(\mathcal{D}) \label{eq:hybrid_condition_b}\vspace{-1mm}
	\end{equation}
then the closed-loop impulsive system \eqref{eq:hybrid_linear} is\vspace{-2mm}
	\begin{itemize} 
		\item uniformly finite-time stable for $\mu<0$ and\vspace{-2mm}
		\[
		x(t)=\zero, \; \forall t\geq t_0+\tfrac{\|x(t_0)\|_{\dn}^{-\mu}}{-\rho \mu},  \forall x(t_0)\in  \overline{\mathcal{C}}, \forall t_0\in \R;\vspace{-2mm}
		\]
		\item uniformly exponentially stable for $\mu=0$ and\vspace{-2mm}
		\[
		\|x(t)\|_{\dn}\leq \tfrac{\|x(t_0)\|_{\dn}}{e^{\rho (t-t_0)}}, \; \forall t\geq t_0, \forall x(t_0)\in  \overline{\mathcal{C}}, \forall t_0\in \R; \vspace{-2mm}
		\]	 
		\item uniformly nearly fixed-time stable for $\mu>0$,\vspace{-2mm}
		\[
		\|x(t)\|_{\dn}\!\leq\! r, \; \forall t\!\geq\! t_0+\tfrac{1}{\rho r^{\mu}}, \forall r\!>\!0, \forall x(t_0)\!\in\!\overline{\mathcal{C}},\forall t_0\!\in\! \R. \vspace{-2mm}
		\] 
	\end{itemize}
\end{theorem}
\begin{proof}
  Since $f\in C(\R^{n+1} ,\R^{n})$,  $\mathcal{C}$ is an open connected set  and $\mathcal{D}\subset \partial \mathcal{C}$,  then  for any initial value $x(t_0)\in \mathcal{C}, t_0\in \R$  the impulsive system \eqref{eq:hybrid_linear} has a solution defined as long as $x^-\notin \partial \mathcal{C}$ and $|x^-|<+\infty$. 
The condition \eqref{eq:cond1_thm}  implies that for  any $t\in \R$ and any $y\in \partial \mathcal{C}\backslash\overline{\mathcal{D}}$, the vector field $f(t,y)$ is oriented inside the open connected set $\mathcal{C}$, so  $\partial \mathcal{C}\backslash\mathcal{D}$ is a repealing set of the system \eqref{eq:hybrid_linear}. Hence, for any initial value $x(t_0)\in \overline{\mathcal{C}}\backslash\overline{\mathcal{D}}$ the system $\mathcal{H}$ has a solution $x(t)$  (belonging to $\mathcal{C}$ for $t>t_0$) as long as 
  $x^-\notin \overline{\mathcal{D}}$ and $|x^-|<+\infty$.  
  Similarly, the condition \eqref{eq:cond2_thm} guarantees that for any $y\in \partial \mathcal{D}$  the vector field $f(t,y)$ is oriented inside the set $\overline{\mathcal{C}}\backslash\{\overline{\mathcal{D}}\}$. So, 
  for any initial value $x(t_0)\in \overline{\mathcal{C}}\backslash\mathcal{D}$ the system \eqref{eq:hybrid_linear} has a solution $x(t)$  (belonging to $\mathcal{C}$ for $t>t_0$) as long as 
  $x^-\notin\mathcal{D}$ and $|x^-|<+\infty$.  
  Since $\Pi\mathcal{D}\subset  \partial C \backslash \overline{D}$  then $\Pi\mathcal{D}\cap \mathcal{D}=\emptyset$ and  there are no multiple values at the instant of jump and  the solution enters in $\mathcal{C}$ after the jump, so it can be continuously prolonged till the next jump.   Let $t_k$ with $k=1,2,...$ denote time instances such that $x^-(t_k)\in \mathcal{D}$. Let us  prove 
   that the sequence $t_k$  does not have an accumulation point. Suppose the contrary, $t_k\to t^*\neq +\infty$ as $k\to +\infty$ provided that $x(t)$ is uniformly bounded $\|x(t)\|<r$ on $[t_0,t^*]$. Since the solution is continuous and uniformly bounded  between  jumps then, due to continuity of $f$,  its derivative is uniformly bounded $\|\dot x(t)\|\leq p,\forall t\in (t_{k},t_{k+1})$, where 
  $ p=\max_{t\in [t_0,t^*],x\in B^n[r])}\|f(t,x)\|.
  $ Then $\delta_k=x^-(t_{k+1})-x(t_{k})=\int^{t_{k+1}}_{t_k}\dot x(\tau) d\tau \to \zero$ as $k\to+\infty$. 
  By Bolzano-Weierstrass Theorem there  always exists a subsequence $t_{k_i}$ such that $x^-(t_{k_i})\to x^*\in \overline{\mathcal{D}}\cap B^n[r]$ as $i\to+\infty$. 
  In this case, $\Pi x^-(t_{k_i})\to \Pi x^*$ as $i\to +\infty$.
	Since  $\Pi x^-(t_{k_i})=x^-(t_{k_{i}+1})-\delta_{k_{i}}$ and $\delta_{k_{i}}\to \zero$ as $i\to+\infty$ 
	then $\Pi x^*=x^{**}:=\lim_{i\to+\infty} x^-(t_{k_{i}}^-)\in \overline{\mathcal{D}}\cap B^n[r]$. Notice that $x^*\in \partial \mathcal{D}\cap B^n[r]$,  otherwise $\Pi x^*=x^{**}\in \partial C \backslash \overline{D}$, but this contradicts to 	$x^{**}\in \overline{\mathcal{D}}$. 
	The condition  \eqref{eq:cond2_thm} together with the continuity of $f$ and the smoothness of $g_2$ imply that 
	for a sufficiently small  $\delta>0$ we have $\tfrac{\partial g_2}{\partial x}f(t,x)\leq-\gamma/2$ for all $x\in 
	(\partial D\cap B^n[r])\dot +B^{n}[\delta]$, there $\dot +$ denotes the geometric sum of sets. 
	Moreover,  there exists $i_r\in \mathbb{N}$ such that for all $i\geq i_r$ 	we have $x(t)\in x^*\dot +B^{n}[\delta],\forall t\in (t_{k_i},t_{k_i+1}).$ Taking into account $g_2(x(t_{k_i})<0$ we derive $g_2(x^-(t_{k_i+1}))=g_2(x(t_{k_i})+\int^{t_{k_i+1}}_{t_{k_i}} \tfrac{\partial g_2}{\partial x}\dot x dt<0$. This contradicts to $x^-(t_{k_i+1})\in \mathcal{D}$. Therefore, 
	any initial value $x(t_0)\in \overline{\mathcal{C}}, t_0\in \R$ the system \eqref{eq:hybrid_linear} has a solution
	defined  as long as $\|x^-\|<+\infty$.
	
For any $x(t_0)\in \mathcal{C}$, due to the condition \eqref{eq:condition_1} and the above explanations, the function $t\mapsto V(x(t))$ decreases for $t>t_0$ as long as $x^-\notin  \mathcal{D}$.		
	On the one hand, 
	 since $\dn(s) = e^{G_{\dn}s} = \sum_{k = 0}^{\infty}\tfrac{s^kG_{\dn}^k}{k!}$, then the identity \eqref{eq:hybrid_condition_a} implies that:\vspace{-2mm}
	\begin{equation}\label{eq:iden_eq}
		\dn(s)\Pi = \Pi\dn(s), \quad \forall s\in\mathbb{R}.\vspace{-2mm}
\end{equation}
	 On the other hand, the Lyapunov function $V(x)=\|x\|_{\dn}$ decreases during the jump if $\|x\|_\dn\le\|x^-\|_\dn$. The latter is equivalent to $\frac{\|x\|_{\dn}}{\|x^-\|_{\dn}}=\|\dn(-\ln \|x^-\|_{\dn})x\|_{\dn}\leq 1$ and to \vspace{-2mm}
	\begin{equation}\label{eq:ineq1}
		x^{\top}\dn^{\top}(-\ln\|x^-\|_\dn)P\dn(-\ln\|x^-\|_\dn)x\leq 1. \vspace{-2mm}
	\end{equation}
Denoting   $s^-=\ln\|x^-\|_\dn$ and using \eqref{eq:iden_eq}, we derive\vspace{-2mm}
	\begin{equation}\label{eq:ident_thm}
		\begin{aligned}
			&x^{\top}\dn^{\top}\!\left(-s^-\right)\!P\dn\!\left(-s^-\right)\!x\! \\
			=&
			(x^-)^{\top}\Pi^\top\dn^{\top}\!\left(-s^-\right)\!P\dn\!\left(-s^-\right)\!\Pi x^-\\
			=&
			\left(x^-\right)^{\top}
			\dn^{\top}\left(-s^-\right)\Pi^{\top}P\Pi\dn\left(-s^-\right)
			x^-	, \quad  x^-\!\in\!\mathcal{D}.
		\end{aligned}\vspace{-2mm}
	\end{equation}
		Since $\|\dn(-s^-)x^-\|_{\dn}=1$ then $\dn(-s^-)x^-\in \pi_{\dn}(\mathcal{D})$, where $\pi_{\dn}(\mathcal{D})$ is the $\dn$-homogeneous projection of the set $\mathcal{D}$ on the unit sphere. It is well defined since $\zero \notin \mathcal{D}$. 
	Therefore, the condition \eqref{eq:hybrid_condition_b}  and the identity \eqref{eq:ident_thm} imply the inequality \eqref{eq:ineq1}, or equivalently, $\|x\|_\dn\le\|x^-\|_\dn$ for any $x^-\in \mathcal{D}$.

	The latter means that the  function $ t\mapsto V(x(t))$ is monotone decreasing along any trajectory $x(t)$ of the system and the sequence $t_k$ of time instances (of jumps) has no accumulation points (see the above explanations). In this case,  the inequality \eqref{eq:condition_1}
	holds almost everywhere along the trajectory of the system and the inequality $V(x(t))\leq V(0)-\rho\int^t_0 V^{1+\mu}(x(\tau)) d \tau$ is fulfilled due to monotonicity of $t\mapsto V(x(t))$.
	Hence, we conclude that 
	$(-\mu)^{-1} (V^{-\mu}(x(t)) - V^{-\mu}(x_0))\leq - \rho t$ for $\mu\neq 0$ and $V(x(t))\!\leq\! e^{-\rho t} V(x_0)$ for $\mu\!=\!0$. The proof is complete.
\end{proof}

  Notice that,  the existence of a Lyapunov function in the form of a canonical homogeneous norm is  not a conservative condition for the class of stable $\dn$-homogeneous vector fields  $f$ (see, \cite{Polyakov2020:Book} for more details).  Theorem  \ref{thm:hybrid_hom_stabilization} can also be utilized for non-homogeneous vector fields, which are close to homogeneous in some sense.
Below, we investigate the stability of a specific  impulsive system, which is common in practice of attitude control. 
\begin{remark}\label{cor:hybrid_stability}
 If $x\!\in\!\mathbb{R}^{2n}$, $u\!\in\!\mathbb{R}^n$,  $G_\dn=\text{diag}(g_1 I_n, g_2  I_n)$, $g_1>0,g_2>0$, $P=\left[\begin{smallmatrix}
			p_{11}I_n & p_{12} I_n\\
			p_{12}I_n & p_{22}I_n
		\end{smallmatrix}\right]$, $p_{i,j}>0$, $\forall i,j\in\mathbb{N}_{\le 2}$, $\Pi\!=\!\left[\begin{smallmatrix}
		-I_n & \boldsymbol{0}\\
		\boldsymbol{0} & I_n
	\end{smallmatrix}\right]$,
	$\mathcal{C}=\{ x\in\mathbb{R}^{2n}: x^{\top}\!H^{\top}\!H
	x<r\}$, 
	$\mathcal{D}\!=\!\left\{x\in\mathbb{R}^{2n}: x^{\top}\!H^{\top}\!H
	x=r, x^{\top}Qx> 0 \right\},$
	where $r>0$ is a constant, and $Q\!=\!\left[\begin{smallmatrix}
		\boldsymbol{0} & I_n\\
		I_n & \boldsymbol{0}
	\end{smallmatrix}\right]$ and $H\in \R^n$ is an arbitrary non-zero matrix,
then $\Pi \mathcal{D}\subset\partial C \backslash \overline{D}$ 	 
	and the conditions \eqref{eq:hybrid_condition_a}, \eqref{eq:hybrid_condition_b} are fulfilled.
\end{remark}
\begin{proof} The identity $\Pi^\top Q \Pi = -Q$ implies that $\Pi \mathcal{D}\subset\partial C \backslash \overline{D}$.		
	On the one hand, for $G_\dn=\text{diag}(g_1 I_n, g_2  I_n)$  we have 
$ G_\dn \Pi \!=\! \Pi G_\dn$.
	On the other hand, 
	if $x=[x_1^\top,x_2^\top]^\top \in \mathcal{D}$, then \vspace{-1mm}
	\[
	x^{\!\top}\!Qx\!=2x_2^{\!\top}x_1\!=2x_1^{\!\top}x_2\!> \!0. \vspace{-1mm}
	\]
	In this case, for $p_{11}>0$, $p_{12}>0$, $p_{22}>0$, we derive \vspace{-2mm}
	\begin{equation*}
		\begin{aligned}
			&x^{\top}\Pi^{\top} P \Pi x= p_{11}x_1^{\top}x_1-2p_{12}x_1^{\top}x_2+p_{22}x_2^{\top}x_2\leq\\
			&
			p_{11}x_1^{\top}x_1+2p_{12}x_1^{\top}x_2+p_{22}x_2^{\top}x_2=x^{\top}Px.
		\end{aligned}\vspace{-2mm}
	\end{equation*}
{
Since $y \in \{\pi_{\dn}(x), \forall x\in \mathcal{D}\}$, according to the latter inequality, we have $y^\top\Pi^{\top} P \Pi y \le y^\top P y  $
Then, according to the definition of canonical homogeneous norm $\|x\|_\dn\Rightarrow \|y\|=1$, we have $y^\top\Pi^{\top} P \Pi y \le 1$.}
This means that the condition \eqref{eq:hybrid_condition_b} is fulfilled as well.
\end{proof}


\subsection{Global Homogeneous Attitude Control}

 
The linear dilations, as well as the generalized homogeneity, are introduced for normed vector spaces, therefore the $SO(3)$ group does not admit a dilation in the above sense. So, the homogeneity-based control cannot be introduced  directly  for the model \eqref{eq:dyn1}. However, the Lie algebra $\mathfrak{so}(3)$ is equivalent to  a vector space, so the homogeneous control can be designed for a model  expressed  in terms of the exponential coordinates, which will be discussed below.

\subsubsection{Global Attitude Dynamics on Lie Algebra $\mathfrak{so}(3)$}

 Let us introduce an attitude tracking error \vspace{-2mm}
\begin{equation}\label{eq:err_def}
	\begin{aligned}
	 R_e= RR_d^{\top}=e^{\theta_e^{\wedge}}=\mathrm{Exp}(\theta_e)
	\end{aligned} \vspace{-2mm}
\end{equation}
where $R_d \in C^2\left(\mathbb{R}_{+}, S O(3)\right)$ represents a desired smooth attitude trajectory and $\theta_e \in \mathbb{R}^3$. For a given smooth trajectory, the attitude error dynamics can be derived in the exponential coordinates, and the relation between $\theta_e$ and $\omega_e$ is (see, \eg,  \cite{chirikjian2012information}) \vspace{-2mm}
\begin{equation}
	\dot{\theta}_e= J_{\textit{r}}^{-1}\omega_e, \vspace{-2mm}
\end{equation}
 where $J_{\textit{r}}=J_{\textit{r}}(\theta_e)$ is the right Jacobian defined in \eqref{eq:J_r},  $\omega_e=R_d\left(\omega-\omega_d\right)$   and 
$\omega_d=(R_d^{\top}\dot{R}_d)^{\vee}$ is the desired angular velocity in body frame.

The error $\theta_e\in B^3[\pi]$ can capture all elements of $R_e$, but the ambiguity appears on the sphere $\left\{\theta_e\in \mathbb{R}^3: |\theta_e|=\pi\right\}$.
The map from $SO(3)$ to $\mathfrak{so}(3)$ is a bijection when $\theta_e\in B^3(\pi)$,
to complete the dynamics 
on 
$B^{3}[\pi]\times\mathbb{R}^3$,  a jump mechanism $\theta_e=-\theta_e^-$ is introduced  for $| \theta_e^- |=\pi$ and $\frac{d}{dt}|\theta^-_e|>0$ (see Appendix or \cite{bharadwaj1998geometry} for more details).

Given a smooth attitude trajectory $R_d$, the global error dynamics of \eqref{eq:err_def} can be then formulated as an  impulsive system (\cite{bharadwaj1998geometry})\vspace{-2mm} 
	\begin{equation}\label{eq:att_err}
		\left\{
			\begin{aligned}
				& \dot{\xi}	=\left[\begin{smallmatrix}
					\boldsymbol{0} & J_r^{-1}\\
					\boldsymbol{0} & \boldsymbol{0}
				\end{smallmatrix}\right]	\xi		+	Bu,\quad \xi^-\in\mathcal{C},\\
				&		\xi=	\Pi \xi^-,\ \qquad \qquad \quad\;\, \xi^-\in\mathcal{D},\\		
			\end{aligned}
		\right.\vspace{-2mm}
	\end{equation}
with $\xi=\left[\begin{smallmatrix}\theta_e\\ \omega_e\end{smallmatrix}\right]$, $\mathcal{D}=\left\{\xi \in  { B^{3}[\pi]\times\mathbb{R}^3}: \xi^{\top} H^{\top} H \xi=\pi^2\right.$ and $\left.\xi^{\top} Q \xi>0\right\}$, $\mathcal{C} = { B^{3}(\pi)\times\mathbb{R}^3}$,  where  $ H=\left[\begin{smallmatrix}I_3 & 0 \\ 0 & 0\end{smallmatrix}\right]$, $B = [
	\boldsymbol{0},
	I_3
]^\top$, the matrices $\Pi$ and $Q$ are as in Corollary \ref{cor:hybrid_stability} with $n=3$.
{
The $u =\dot{\omega}_e$ denotes the virtual {stabilizing} control needs to be designed.
}
The system \eqref{eq:dyn} is equivalent to \eqref{eq:att_err}, so  the asymptotic stability of \eqref{eq:att_err} implies $R\rightarrow R_d$ and $\omega\rightarrow\omega_d$ as  $\xi \rightarrow \zero$. 

	

\subsection{Homogeneous attitude control on $\mathfrak{so}(3)$}\label{section:att_control}

As stated in the introduction, there is no continuous time-invariant feedback control for global asymptotic attitude stabilization. The jump behavior of \eqref{eq:att_err} reveals this statement.  
 A continuous feedback  may generate a discontinuous signal due to the state jump. 

\begin{theorem}\label{thm:hybrid_control}
	Let \vspace{-2mm}
	\begin{equation}\label{eq:hom_con_Q}
	u_{hom}\!=\!\|\xi\|_{\dn}^{1+\mu}K\dn(-\ln \|\xi\|_{\dn})\xi, \quad {\xi=\left[\theta_e^\top, \omega_e^\top\right]^\top}\vspace{-2mm}
	\end{equation} be a $\dn$-homogeneous controller with $\mu\in[-1,1)$, $K = [-K_1\;\; -k_2I_3]$, $0\!\prec\! K_1\in \R^{3\times 3}$, $k_2>0,$ the dilation  $\dn(s)=e^{sG_{\dn}}, s\in \R$,  the canonical homogeneous norm induced by the Eclidean norm $\sqrt{\xi^\top P\xi}$ and  \vspace{-2mm}
\begin{equation}\label{eq:P}
\!G_\dn \!=\! \left[\!\begin{smallmatrix}
		(1-\mu)I_3 & \boldsymbol{0}\\
		\boldsymbol{0} & I_3
	\end{smallmatrix}\!\right]\!,  P \!=\! \left[\!\begin{smallmatrix}
		I_3 & \varepsilon I_3\\
		\varepsilon I_3 & K_1^{-1}
	\end{smallmatrix}\!\right]\!, 	 0\!<\!\varepsilon\!<\!\min\!\left\{\epsilon_\mu, \tilde \epsilon\right\}\!,\!\!\vspace{-2mm}
\end{equation} 
 $ \epsilon_{\mu}\!=\!
\!\tfrac{2(1-\mu)^{\frac{1}{2}}}{(2-\mu)\lambda^{\frac{1}{2}}_{\max}\!(K_1)}$, $\tilde \epsilon\!=\! \frac{4k_2}{2c\lambda_{\max }\!\left(K_1\!\right)+k_2^2}
$,  
$c\!=\!\max\limits_{|\theta_e|\leq \pi}\lambda_{\max }(J_r^{-1}\!+\!J_r^{-\top})$.
 Then the
 impulsive system \eqref{eq:att_err}  with $u=u_{hom}$ is:
	\begin{itemize} 
		\item globally uniformly finite-time stable for $\mu<0$:\\
		$
		\xi(t)=\zero, \quad \forall t\geq \|\xi(0)\|^{-\mu}_{\dn}/(-\mu \tilde{\rho});
		$
		\item globally uniformly exponentially stable for $\mu=0$:\\
		$
		\|\xi(t)\|_{\dn}\leq e^{-\tilde{\rho} t}\|\xi(0)\|_{\dn},\quad \forall t\geq 0;
		$		 
		\item globally uniformly nearly fixed-time stable for $\mu>0$:\\ 
		$
		\|\xi(t)\|_{\dn}\leq r,\quad  \forall t\geq \tfrac{1}{\tilde{\rho} \mu r^{\mu}}, \forall r>0;
		$
	\end{itemize}\vspace{-2mm}
 \begin{equation}\label{eq:tilde_rho}
		\tilde{\rho} = \tfrac{\lambda_{\min }\left(P^{-\tfrac{1}{2}}\left[\begin{smallmatrix}
			2\varepsilon K_1 & \varepsilon K_2\\
			\varepsilon K_2 &  \! K_1^{-1}K_2+K_2K_1^{-1}-\varepsilon cI_3\!
		\end{smallmatrix}\right]P^{-\tfrac{1}{2}}\right)}{\lambda_{\max }(P^{1/2}G_\dn P^{-1/2} +P^{-1/2}G_\dn^\top P^{1/2} )}.
\end{equation}
Moreover, the selection \vspace{-2mm}
\begin{equation}\label{eq:att_control}
	M\!=\!J
	\left[ R_d^{\top}u_{hom}\! -\omega_d^{\wedge}\omega\!+\!\dot{\omega}_d\right]			 +\omega\!\times\! J\omega \vspace{-2mm}
\end{equation} 
where 
$		\dot{\omega}_d\!=\!\tfrac{d}{dt}(R_d^{\top}\dot{R}_d)^{\vee}\!=\!(R_d^{\top}\ddot{R}_d\!-\!(\omega_d^{\wedge})^2)^{\vee}\in C(\R_+,\R^3)$,   solves the finite-time attitude tracking problem for the system \eqref{eq:dyn1}, \eqref{eq:dyn2} if $\mu<0$.
\end{theorem}
\begin{proof}
 For $\omega_e = R_d(\omega - \omega_d)$ we derive \vspace{-2mm}
\begin{equation}
		 \dot{\omega}_e=
		R_d\omega_d^\wedge(\omega-\omega_d)
		+R_d(\dot{\omega}-\dot{\omega}_d).\vspace{-2mm}
\end{equation}
		Since $\omega ^\wedge_{d} \omega_d = \zero $ then using 
\eqref{eq:dyn2} and \eqref{eq:att_control} we obtain $\dot {\omega}_{e}=u_{hom}$ and the dynamics of the closed-loop system becomes \eqref{eq:att_err} with $u=u_{hom}$.

The dilation  $\dn$ is monotone since 
$PG_\dn \!+\! G_\dn^\top P\succ 0\!\Leftrightarrow\! 2(1-\mu)I_3\! -\! \tfrac{1}{2}\varepsilon^2(2\!-\!\mu)^2 K_1\succ 0,$ { $P\succ 0 \Leftrightarrow K_1^{-1}\succ \varepsilon^2$}.
The latter inequalities hold for  $\varepsilon<\epsilon_\mu.$
	{ For the Lyapunov function candidate $V(\xi) = \|\xi\|_\dn$}, we have\vspace{-2mm} 
\begin{equation}
	\tfrac{d}{dt}\|\xi\|_\dn \!=\!  \|\xi\|_\dn\tfrac{\xi^\top\dn^\top(-s_\xi)P\dn(-s_\xi)\dot{\xi}}{\xi^\top\dn^\top(-s_\xi)PG_\dn\dn(-s_\xi)\xi}, \quad \forall  \xi^- \!\in\!\mathcal{C}.\vspace{-2mm}
\end{equation}
Hence, using the representation \vspace{-2mm}
\begin{equation*}
	\begin{aligned}
		&\xi^\top \dn^\top(\!-\!s_\xi) P \dn(\!-\!s_\xi)\dot{\xi}
		=
		\left[\!\begin{smallmatrix}
			\|\xi\|_{\dn}^{\mu-1}\theta_e\\
			\|\xi\|_{\dn}^{-1}\omega_e
		\end{smallmatrix}\!\right]^\top \!P\!
		\left[\!\begin{smallmatrix}
			\|\xi\|_{\dn}^{\mu-1}\dot{\theta}_e\\
			\|\xi\|_{\dn}^{-1}\dot{\omega}_e
		\end{smallmatrix}\!\right]\\
		& = 
		\|\xi\|_{\dn}^{2(\mu-1)}\theta_e^\top J_r^{-1}\omega_e +
		\varepsilon \|\xi\|_{\dn}^{\mu-2}\omega_e^\top J_r^{-1}\omega_e\\
		& \quad+
		 \varepsilon \theta_e^\top  \left(-\|\xi\|_\dn^{3\mu-2}K_1 \theta_{e}  - \|\xi\|_\dn^{2(\mu-1)}k_2 \omega_e\right)\\
		&\quad  +  \omega_e^\top K_1^{-1}\left(-\|\xi\|_\dn^{2\mu-2}K_1 \theta_{e}  - \|\xi\|_\dn^{\mu-2}k_2 \omega_e\right)\\
		& \!= 
		\!-\!\varepsilon \|\xi\|_\dn^{3\mu-2}\theta_e^\top K_1 \theta_{e} \!+\! \|\xi\|_\dn^{\mu-2}\omega_e^\top( \varepsilon J_r^{-1}\! -\! k_2K_1^{-1}) \omega_e\\
		&\quad+ \|\xi\|_\dn^{2(\mu-1)}\theta_e^\top(J_r^{-1} - (\varepsilon k_2 +1) I_3) \omega_e 
	\end{aligned}\vspace{-2mm}
\end{equation*}
where $s_\xi = \ln\|\xi\|_\dn$, we derive \vspace{-2mm}
\begin{equation}
	\tfrac{d}{dt}\|\xi\|_\dn \!=\!  \|\xi\|_\dn^{1+\mu}\tfrac{\xi^\top\dn^\top(-s_\xi)W_0\dn(-s_\xi)\xi}{\xi^\top\dn^\top(-s_\xi)PG_\dn\dn(-s_\xi)\xi}, \quad  \xi^- \!\in\!\mathcal{C}\vspace{-2mm}
\end{equation}
where 
$
W_0=\left[
\begin{smallmatrix}
	-\varepsilon K_1 & \frac{1}{2}( J_r^{-1} - (\varepsilon k_2+1)I_3  )\\
	* & \varepsilon \frac{J_r^{-1}+J_r^{-\top}}{2} - k_2K_1^{-1}
\end{smallmatrix}\right].
$
Taking into account the form of $J_r^{-1}$ given by \eqref{eq:J_r} we conclude \vspace{-2mm}
\begin{equation}
	\theta_e^{\top}J_r^{-1}\omega_e =\theta_e^{\top}\omega_e.\vspace{-2mm} 
\end{equation}
Since $J_r^{-1}$ is bounded  for  $|\theta_e|\leq \pi$, then \vspace{-2mm}
\begin{equation}
	\tfrac{d}{dt}\|\xi\|_\dn \!\le\! \tfrac{ \|\xi\|_\dn^{1+\mu} \xi^\top\dn^\top(-s_\xi)
		W
		\dn(-s_\xi)\xi}{\xi^\top\dn^\top(-s_\xi)(PG_\dn + G_\dn^\top P)\dn(-s_\xi)\xi}, \quad  \xi^- \!\in\!\mathcal{C}, \vspace{-2mm}
\end{equation}
where $W \!=\! \left[\begin{smallmatrix}
	-2\varepsilon K_1 & -\varepsilon k_2I_3\\
	-\varepsilon k_2I_3 & \varepsilon cI_3\! -\! 2k_2K_1^{-1}
\end{smallmatrix}\right]$ and $c$ is defined in the statement of the theorem.
Using Schur Complement we derive $W\prec 0$ if 
$
\varepsilon<\tilde{\epsilon}.
$ Therefore, for $
\varepsilon<\min\left\{\epsilon_\mu, \tilde \epsilon\right\},
$ we have $P\succ 0, W\prec 0$. Using to the definition of the canonical homogeneous norm, we obtain $\xi^\top\dn^\top(-s_\xi)P\dn(-s_\xi)\xi=1$ and \vspace{-2mm}
\begin{equation}
	\dot{V}\le-\tilde{\rho} \|\xi\|_\dn^{1+\mu},\quad \quad \xi^-\in\mathcal{C}\backslash\{\zero\}.\vspace{-2mm}
\end{equation}
In the view of Remark \ref{cor:hybrid_stability} to complete the prove by applying Theorem \ref{thm:hybrid_hom_stabilization} we just need to show that conditions \eqref{eq:cond1_thm} and  \eqref{eq:cond2_thm} are fulfilled for $g_1(\xi)=\theta_e^{\top} \theta_e-\pi^2$ and $g_2(\xi)=\omega_e^{\top}\theta_e$. 
Indeed, for $\xi\in \partial C\backslash\overline{\mathcal{D}}$ we have $g_2(\xi)<0$ and\vspace{-2mm}
\[
\tfrac{\partial g_1}{\partial \xi}f(\xi,t) = 2\theta_e^\top J_r^{-1}\omega_e = 2\theta_e^\top\omega_e<0,\ \xi\in \partial C\backslash\overline{\mathcal{D}}\vspace{-2mm}
\]
but for $\xi\in \partial \mathcal{D}$ we have $g_1(\xi) = \theta_e^{\top} \theta_e-\pi^2 = 0$, $g_2(\xi)=\omega_e^{\top}\theta_e = 0$, and \vspace{-2mm}
\[
\tfrac{\partial g_1}{\partial \xi}f(\xi,t) = 2\theta_e^\top J_r^{-1}\omega_e = 2\theta_e^\top\omega_e=0,\ \xi\in \partial \mathcal{D},\vspace{-2mm}
\]
\[
\tfrac{\partial g_2}{\partial \xi}f(\xi,t) \!=\! 2\omega_e^\top J_r^{-1}\omega_e \!+\! 2\|\xi\|_\dn^{\mu+1}\theta_{e}^\top K\dn(-s_\xi)\xi, \ \xi\in \partial \mathcal{D} .
\]
For  $\theta^\top \omega_e = 0$, using \eqref{eq:J_r}, we derive  $\omega_e^\top J_r^{-1}\omega_e=|\omega_e|^2+\omega_e^\top \left(\frac{\theta_e^{\wedge}}{|\theta_e|}\right)^2\omega_e=\omega_e^\top \frac{\theta_e\theta ^{\top}_e}{|\theta_e|^2}\omega_e=0$. Since  $\theta_{e}^\top K\dn(-s_\xi)\xi = -\|\xi\|_\dn^{\mu-1}\theta_e^\top K_1\theta_e -\|\xi\|_\dn^{-1}k_2\theta_e^\top \omega_e$
then \vspace{-2mm}
	\[
	\tfrac{\partial g_2}{\partial \xi}f(\xi,t) = -\|\xi\|_\dn^{2\mu}\theta_{e}^\top K_1\theta_{e}<0,\quad  \xi\in \partial \mathcal{D}.\vspace{-2mm}
	\]
Taking into account the equivalence of Euclidean norm and the canonical homogeneous norm, we conclude that 
for any $\forall r\in\mathbb{R}_+$ there exists $\gamma>0$ such that\vspace{-2mm}
	\[
	\tfrac{\partial g_2}{\partial \xi}f(\xi,t)< -\gamma,  \ \xi\in\! \partial D\cap B^n[r].\vspace{-2mm}
	\]
The proof is complete.
\end{proof} 

Due to the monotone decay of the Lyapunov function, the states of the closed-loop system  \eqref{eq:att_control} and \eqref{eq:att_err} will eventually enter an invariant set being a neighborhood of the origin.
For any bounded initial condition, the system trajectory state will have a finite number of jumps (see the proof of Theorem \ref{thm:hybrid_hom_stabilization}).
Clearly, for $\mu<0$, finite-time convergence to origin is guaranteed but the settling time depends on the initial states; for $\mu>0$ the algorithm ensures a fixed-time convergence to a neighborhood of the origin independently of the initial state of the system. 

 Using a commutation of homogeneous controller (as in \cite{Polyakov_etal2015:Aut}) a global fixed-time stabilization of the tracking error at zero can be guaranteed. 

 \begin{corollary}[Fixed-time stabilization]\label{cor:fxt}
 	Let $-1\leq \mu_2<0< \mu_1<1$, { $K = [-K_1\;\; -k_2I_3]$, $0\!\prec\! K_1\in \R^{3\times 3}, k_2\in\mathbb{R}_+ $}and \vspace{-2mm}
 	\begin{equation}\label{eq:fxt}
 		u_{fxt} = \left\{
 		\begin{array}{ccc}
 			\|\xi\|_{\dn_1}^{1+\mu_1}K\dn(-\ln \|\xi\|_{\dn_1})\xi  &\text{ if }&\xi^{\top}\!P\xi\!\ge\! 1,\\
 			\|\xi\|_{\dn_2}^{1+\mu_2}K\dn(-\ln \|\xi\|_{\dn_2})\xi &\text{ if } &\xi^{\top}\!P\xi\!\le\! 1,
 		 \end{array}
 		\right.\vspace{-2mm}
 	\end{equation}
 where the dilations $\dn_1$, $\dn_2$ and  the canonical homogeneous norms $\|\cdot\|_{\dn_1}, \|\cdot\|_{\dn_2}$ are defined as in Theorem \ref{thm:hybrid_control}  for $\mu=\mu_1$ and $\mu=\mu_2$, respectively, but the matrix $P$ is 
 as in Theorem \ref{thm:hybrid_control} with  $\varepsilon\in (0,\min\{\epsilon_{\mu_1},\epsilon_{\mu_2},\tilde \epsilon\})$. Then the impulsive system  \eqref{eq:att_err} with $u=u_{fxt}$ is globally uniformly fixed-time stable:\vspace{-2mm}
  	$$\forall\xi(t)=\zero, \quad \forall t\ge \tfrac{1}{-\tilde{\rho}_2\mu_2} + \tfrac{1}{\tilde{\rho}_1\mu_1},\vspace{-2mm}$$
  	where $\tilde \rho_1>0$ and $\tilde \rho_2>0$ are defined by the formula \eqref{eq:tilde_rho} with $\mu=\mu_1$ and $\mu=\mu_2$, respectively. 
  	
  	Moreover, the selection of $M$ in the form \eqref{eq:att_control} replacing  $u_{hom}$ with $u_{fxt}$ solves
  	the fixed-time attitude tracking problem for the system \eqref{eq:dyn1}, \eqref{eq:dyn2}.
 \end{corollary}
 
\begin{proof}
 Since  $\xi^{\top}P\xi =1$ (resp., $\leq 1$, $\geq 1$) is equivalent to $\|x\|_{\dn_1}=1$ (resp., $\leq 1, \geq 1$)
		and it is  equivalent to $\|x\|_{\dn_2}=1$ (resp., $\leq 1, \geq 1$), then, in both cases $\mu=\mu_1$ and  $\mu=\mu_2$, 
	the unit ball $\{\xi: \xi^{\top}P\xi \leq 1\}$ is a strictly positively invariant and globally uniformly  attractive set of   the system \eqref{eq:att_err} with the homogeneous control \eqref{eq:hom_con_Q}. So,	due to the estimate $\dot V\leq -\tilde \rho V^{1+\mu}$ (see, the proof of Theorem \ref{thm:hybrid_control}),
	 any trajectory of the closed-loop system \eqref{eq:att_err}, \eqref{eq:fxt} converges to the unit ball in a fixed time, namely,
	$\|\xi(t)\|_{\dn_2}\leq 1,\  \forall t\ge T_1 = \tfrac{1}{\tilde{\rho}\mu_1}$.  The latter means that $\xi^{\top}(t)\!P\xi(t)\!\le\! 1$ for all $t\geq T_1$ (i.e., $\mu=\mu_2$), so the system converges to zero in a fixed time and  $\xi(t)=\zero$ for all  $t\!\ge\! T_1\!+\!\tfrac{1}{-\tilde{\rho}\mu_2}$ independently of the initial state $\xi(0)$.  
\end{proof}

The switching control \eqref{eq:fxt} combines two homogeneous controllers \cite{Polyakov_etal2015:Aut} to guarantee a global fixed-time tracking. Since 
$\|\xi\|_{\dn_1}=\|\xi\|_{\dn_2}=1$ for $\xi^{\top}P\xi=1$ and  $\dn_1(0)=\dn_2(0)=I_6$, then the control $u_{fxt}$ is  a continuous on $\{\xi: \xi^{\top}P\xi=1\}$.

\subsection{Robustness analysis of homogeneous controller}
Any control system is subjected to perturbations and measurement noises in practice. In this case, it is important to investigate its robustness in the sense of Input-to-State Stability (ISS), \cite{Sontag1989:TAC}, \cite{Hespanha_etal2008:Aut}. Below we also prove the so-called finite-time and fixed-time ISS developed in  \cite{Hong_etal2010:SIAM}, \cite{Lopez-Ramirez_etal2020:SCL}, \cite{Aleksandrov_etal2022:MCSS}.
Notice that the noised measurement  of $R\in SO(3)$  can be modeled as follows 
$$\tilde{R}=d_R R,$$ where $\tilde R\in SO(3)$ is a measured/estimated rotation matrix, and $d_R\in SO(3)$ models measurement noise on $SO(3)$.
In this case, we derive \vspace{-2mm} $$\tilde{\theta}_e^{\wedge}=
\log(d_R R R^{\top}_d)=\log(R_e)+\log(d_R) =\theta_e^{\wedge}+\log(d_R),\vspace{-2mm}$$
i.e. the model of measurements in exponential coordinates (on the Lie algebra $\mathfrak{so}(3)$) includes the noise in the additive manner. So, the perturbed closed-loop attitude control system can be modeled as follows	\vspace{-2mm}  
	\begin{equation}\label{eq:att_err_pert}
		\left\{
		\begin{aligned}
			& 	\dot{\xi}
			\!=\!
			 \tilde{A}
			\xi
			\!+\!
			Bu(\xi+\delta_1) \!+\!\delta_2,  \ &\xi^-\in\mathcal{C},\\
			&		\xi=	H	\xi^-,& \xi^-\in\mathcal{D},\\				
		\end{aligned}
		\right.\vspace{-2mm}
	\end{equation}
where $\tilde{A} = \left[\begin{smallmatrix}
	\boldsymbol{0} & J_r^{-1}\\
	\boldsymbol{0} & \boldsymbol{0}
\end{smallmatrix}\right]$, $\xi,u$, $\mathcal{C},\mathcal{D}$ are as before,  $\delta_1 = [(\log(d_R)^\vee)^\top, d_\omega^\top]^\top\!\in\mathbb{R}^6$ is a measurement noise, $\delta_2\in\mathbb{R}^6$ is an exogenous perturbation and   $d_\omega\in\mathbb{R}^3$ is the measurement noise of the angular velocity.

\begin{corollary}\label{thm:coro1}
Under conditions of Theorem \ref{thm:hybrid_control} (Corollary \ref{cor:fxt}), the impulsive system \eqref{eq:att_err_pert} with $u\!=\!u_{hom}$ ($u\!=\!u_{fxt}$) is 
\begin{itemize} 
	\item locally ISS with respect to $\delta=[\delta_1^{\top},\delta_2^{\top}]^{\top}\in L^{\infty}(\R,\R^6)$,
	\item  (fixed-time)  ISS with respect to $\delta\!=\![\delta_1^{\top},\delta_2^{\top}]^{\!\top}\in \Delta$, 
\end{itemize}	
	for $\mu>-1$,  where $\Delta\subset L^{\infty}(\R,\R^6)$ is set of perturbations such that 
	 the system	\eqref{eq:att_err_pert} is forward complete\footnote{The system is forward complete if all solutions of the system exist and defined globally in the forward time.}.
\end{corollary}

\begin{proof}
	 Let the  vector  field $f:\R^{18} \mapsto  \R^{6}$ be defined as \vspace{-2mm}
	$$f(\xi,\delta)=\dot{\xi}
	=	\tilde{A}\xi	+	Bu_{hom}(\xi+\delta_1) +\delta_2,\vspace{-2mm}$$
	where $\xi\in \R^6,\delta_1,\delta_2\in \R^{6}, \delta=(\delta_1^{\top},\delta_2^{\top})^{\top}\in \R^{12}$.
	For $V = \|\xi\|_\dn$ we have \vspace{-2mm}
	\begin{equation}\label{eq:partial}
		\tfrac{\partial V(z)}{\partial z} \big |_{z = \dn(s)\xi} \dn(s)  = e^s\tfrac{\partial V}{\partial \xi}, \forall \xi\neq \zero, \forall s\in \R .  \vspace{-2mm}
	\end{equation}
In this case, using the identity $\dn(s)B=e^sB$  we derive \vspace{-2mm}
\[
\begin{array}{l}
\tfrac{\partial V}{\partial \xi} f(\xi,\delta) \!=\!\tfrac{\partial V}{\partial \xi} f(\xi,0) \!+\!  \tfrac{\partial V}{\partial \xi} \left(f(\xi,\delta)\!-\!f(\xi,0)\right)=\\
\tfrac{\partial V}{\partial \xi} f(\xi,0)+\tfrac{\partial V}{\partial \xi} \left(Bu_{hom}(\xi+\delta_1)-Bu_{hom}(\xi)+\delta_2\right)=\\
\tfrac{\partial V}{\partial \xi} f(\xi,0)\!+\!\tfrac{\partial V(z)}{\partial z} 
\left(B(u_{hom}(\xi\!+\!\delta_1)-u_{hom}(\xi))\!+\!\dn(s)\delta_2\right)\!,
\end{array}\vspace{-2mm}
\]
where $z=\dn(s)\xi$.  The $\dn$-homogeneity of $u_{hom}$ gives\vspace{-2mm}
$$u_{hom}(y)=e^{-(1+\mu)s}u_{hom}(\dn(s)y), \forall y\in \R^6,\forall s\in \R.\vspace{-2mm}$$ Taking $s=-\ln\|\xi\|_{\dn}$ we derive  
$
\|Bu_{hom}(\xi+\delta_1)-Bu_{hom}(\xi)\|= \|\xi\|_{\dn}^{1+\mu} \|Bu_{hom}(z+\dn(-\ln \|\xi\|_{\dn})\delta_1)-Bu_{hom}(z)\|,
$
where $z=\dn(-\ln \|\xi\|_{\dn})\xi$ belongs to the unit sphere  $ \|z\|=1$.
Since $u_{hom}\in C^1(\R^{6}\backslash\{\zero\},\R^3)$ then for $\|\dn(-\ln\|\xi\|_{\dn})\delta_1\|\leq 0.5$ 
using mean value inequality we derive\vspace{-2mm}
\[
\|Bu_{hom}(z+\dn(s)\delta_1)-Bu_{hom}(z)\|\leq \vspace{-2mm}
\]
\[ \left\|B\smallint^1_0\left. \tfrac{\partial u_{hom}(y)}{\partial y}\right|_{y=z+\lambda \dn(s)\delta_1} \dn(s)\delta_1 d \lambda \right\|\leq  C_1 \|\dn(s)\delta_1\|,\vspace{-2mm}
\]

where $s = -\ln\|\xi\|_{\dn}$, $C_1=\sup_{0.5\leq \|y\|\leq 1.5} \|B\frac{\partial u_{hom}(y)}{\partial y}\|$.
Since $C_2=\sup_{\|z\|=1} \|\frac{\partial V}{\partial z}\|<+\infty$ then \vspace{-2mm}
\[
\tfrac{\partial V}{\partial x} f(\xi,\delta) \leq \tfrac{\partial V}{\partial x} f(\xi,0)+ V^{1+\mu} \tilde C\sum_{i=1}^2\|\dn(-\ln \|\xi\|_{\dn})\delta_i\|,\vspace{-2mm}
\]
where $\tilde C=C_2\max\{C_1,1\}$. In Theorem \ref{thm:hybrid_control} it is shown that 
$\tfrac{\partial V}{\partial x} f(\xi,0)\le -\tilde{\rho}V$ for all $\xi \in \mathcal{C}.$ 
If $\xi\in\mathcal{C}$ and  $\|\dn(-\ln \|\xi\|_{\dn})\delta_i\|\leq\frac{\tilde \rho}{4\tilde C}, i=1,2$
imply
$\tfrac{\partial V}{\partial x} f(\xi,\delta)\leq -\frac{\tilde \rho}{2} V^{1+\mu}.$
Since $\|\dn(-\ln \|\xi\|_{\dn})\delta_i\|\leq\frac{\tilde \rho}{4\tilde C}$ is equivalent to 
$\|4\tilde C\delta_i/\tilde \rho\|\leq \|\xi\|_{\dn}$ then for $\xi\in \mathcal{C}$ we have \vspace{-2mm}
\begin{equation}\label{eq:ISS_LF}
 V(\xi)\geq \|4\tilde C\delta_i/\tilde \rho\| \quad  \Rightarrow \quad \dot V(\xi)\leq -\tfrac{\tilde\rho}{2}V^{1+\mu}(\xi).\vspace{-2mm}
\end{equation}
In the case of the impulse-free continuous time system, the latter implication guarantees  that $V$ is an  ISS Lyapunov function (see, \cite{SontagWang1996:SCL}). Since $\mathcal{C}$ is open connected set and $0\in \mathcal{C}$ then for sufficiently small perturbations and initial conditions the system \eqref{eq:att_err_pert} has no impulse so it is locally ISS. Using the continuity of $f$ and smoothness of $V$ we expand \eqref{eq:ISS_LF}
to $\xi\in \overline{\mathcal{C}}\backslash\mathcal{D}$. In Theorem \ref{thm:hybrid_control} it is shown that $V(\Pi \xi)\leq V(\xi)$ for any $\xi\in \mathcal{D}$. Hence, if the impulsive system \eqref{eq:att_err_pert} has solutions (in the sense of  \cite{Hespanha_etal2008:Aut}) belonging to $\mathcal{C}$ almost everywhere then the function $t\mapsto V(\xi(t))$ with $t\geq t_0$ is strictly decreasing  as long as $ V(\xi(t))\geq \frac{4\tilde C\|\delta_i\|_{L^{\infty}(t_0,t)}}{\tilde \rho}$. Using conventional arguments (see, \cite{SontagWang1996:SCL}, \cite{Hespanha_etal2008:Aut}) we derive  ISS of the impulsive system. Moreover, from 
\eqref{eq:ISS_LF} with $\mu<0$ it follows the finite-time ISS (\cite{Hong_etal2010:SIAM}), but the fixed-time ISS (\cite{Lopez-Ramirez_etal2020:SCL}, \cite{Aleksandrov_etal2022:MCSS}) of the system \eqref{eq:att_err_pert}  with $u=u_{fxt}$ follows from \eqref{eq:ISS_LF} with $\mu_2<0$ for $V(\xi)\leq 1$ and from \eqref{eq:ISS_LF} with $\mu=\mu_1>0$ for $V(\xi)>1$. The proof is complete.
\end{proof}

\section{Numerical Simulation}
\label{sec:simu}

 A simulation of a quadrotor attitude tracking is created  in Simulink with the  step size $0.001$ sec. The model corresponds to  QDrone platform manufactured by Quanser: $J= 10^{-2}\text{diag}(1.0, 0.82 ,1.48)$ kg$\cdot$ m$^2$.
The initial states are given by   {$\omega(0)\!=\![0,-1,0]^{\top}$rad$\cdot$ sec$^{-1}$ and   $R(0)\!=\!I$.

The parameters of the homogeneous controller \eqref{eq:hom_con_Q} are selected as follows $\mu=-1/3$ and  $K = [-12 I_3, -6I_3]$, $G_{\dn}=
\left[\begin{smallmatrix}
	\tfrac{4}{3}I_3 & \boldsymbol{0}\\
	\boldsymbol{0} & I_3
\end{smallmatrix}\right]$, 
 $\varepsilon=0.05$, $P=\left[\begin{smallmatrix}
I_3 & 0.05I_3\\
0.05I_3 & 0.083I_3
\end{smallmatrix}\right]$.}
{ Hence, we have $\tilde{\rho}=0.3865$.
	
	For comparison, we consider the finite-time controller proposed in \cite{shi2018global}, \cite{shi2017almost}\vspace{-2mm}
\begin{equation}
	u_1 = -J_r^{-\top}k_p  sign(\theta_e)^{\alpha_1}  - k_d  sign(\omega_e)^{\alpha_2}\vspace{-2mm}
\end{equation}
where $\alpha_1 = \alpha$, $\alpha_2=2\alpha/(1+\alpha)$, $\alpha\in(0,1)$; $x=[x_1,x_2,x_3]^\top\in\mathbb{R}^3$, $ sign(x)^\alpha=[|x_2|^\alpha sign(x_1),$ $
|x_2|^\alpha sign(x_2),|x_3|^\alpha sign(x_3)]^\top$.
Since the tracking error system with $u_1$ is $\dn$-homogeneous  with degree $\tfrac{\alpha-1}{\alpha +1}$, so, for a fair comparison, we select $\alpha = 0.5\Rightarrow \tfrac{\alpha-1}{\alpha +1}=\mu =-1/3$ to have the same homogeneity degree for both controller. 

 The linear controller proposed in \cite{yu2016global}: \vspace{-2mm}
\begin{equation}
	u_2 = -k^{l}_p \theta_e  - k^{l}_d \omega_e \vspace{-2mm}
\end{equation}
is also utilized for comparison.  The controller $u_2$ corresponds to our homogeneous controller if $\mu=0$. 
We use the same gains for all controllers: $K_1 = k_p = k_p^l = 12I_3$ and $K_2 = k_d = k_d^l = 6I_3$ (the gain selection takes into consideration that each element of torque is bounded by $[-0.5, 0.5]$).  In all cases, the torque $M$ is given by \eqref{eq:att_control}.

The desired trajectory is generated by 
 $\omega_d \!=\! [\text{--}0.2t,\text{--}0.2t+3,t]^\top$ and $R_d$ obtained by integration of $\dot R_d=R_d w_d^{\wedge}$ with the initial condition $R_d(0) = e^{([-0.1\pi, 0.95\pi, 0]^\top)^\wedge}$.  
 Two cases are studied by numerical  simulations:
\\
 \textit{Case 1}: no measurement noises or perturbations;\\ 
 \textit{Case 2}: the measurement noise $\delta_1$ is modeled by the uniformly distributed random variables $d_{R}$ and $d_{\omega}$ satisfy $\|d_R\|_{\infty}\le 0.05$ and $\|d_{\omega}\|_{\infty}\le 0.05$, but the additive perturbation is deterministic, state-dependent and defined as $\delta_2 = [\boldsymbol{0}, 0.01\operatorname{diag}(2,2,1)\omega)]^\top$.\\
	 The simulation results are shown in Fig. \ref{fig:1}, \ref{fig:2}, and \ref{fig:4}. The control energy $E = (\smallint_{0}^{t_f}\|M\|^2d\tau)^{1/2}$ on the time interval  $[0,t_f]$  is compared in Table 1. In the disturbance-free case, as shown in Fig. \ref{fig:1}, the system with finite-time controls exhibits faster convergence compared to the linear one. The proposed homogeneous control $u_{hom}$ is faster than $u_1$ and requires less control energy.
	 Figure \ref{fig:2} illustrates the state dependent jump and the decay of the canonical homogeneous norm during the jump.
	 Measurement noises and perturbations destroy   the ideal asymptotic and finite-time stability properties. However, in this specific case, the finite-time controllers $u_{hom}$ and $u_1$ ensure higher accuracy than the linear controller (see Fig. \ref{fig:4}). Similarly to the disturbance free case the homogeneous controller $u_{hom}$ has faster response that the finite-time controller $u_1$. Additionally, the proposed homogeneous controller $u_{hom}$ consuming less energy providing the better control quality (faster response and/or better precision). 
	 
	For the control homogeneous controller $u_{hom}$, the settling time estimate  obtained in Theorem 3 gives $10.99$ sec . This estimate is rather conservative, but it is obtained for the first time for the homogeneity-based global attitude  control system.
}
\begin{figure}[thpb]
	\centering
	\includegraphics[width=0.40\textwidth]{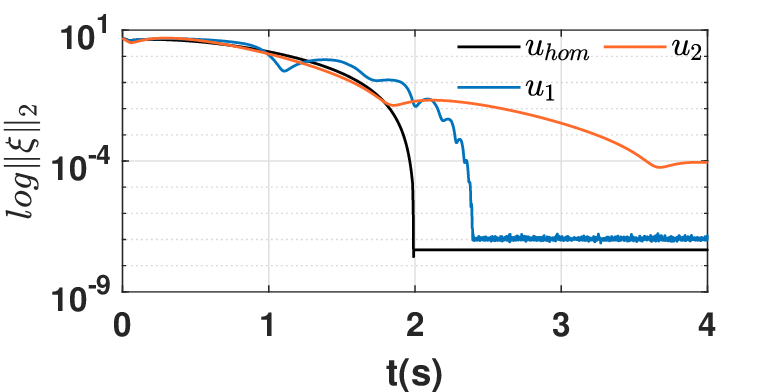}
	\caption{The  norm of the tracking error $\xi$ (Case 1).}
	\label{fig:1}
\end{figure}
\vspace{-2mm}
\begin{figure}[thpb]
	\centering
	\includegraphics[width=0.33\textwidth]{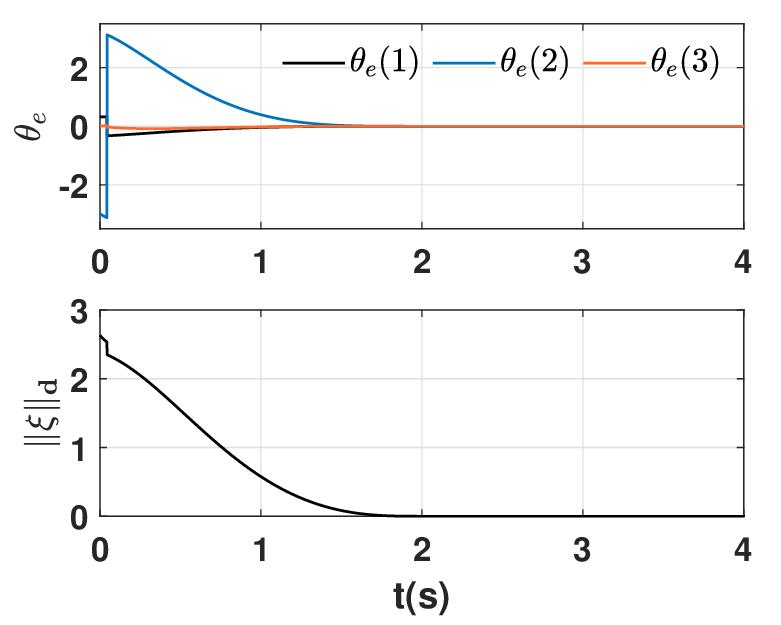}
	\caption{The jump behavior under proposed homogeneous controller (Case 1).}
	\label{fig:2}
\end{figure}
\vspace{-2mm}

\begin{figure}[thpb]
	\centering
	\includegraphics[width=0.33\textwidth]{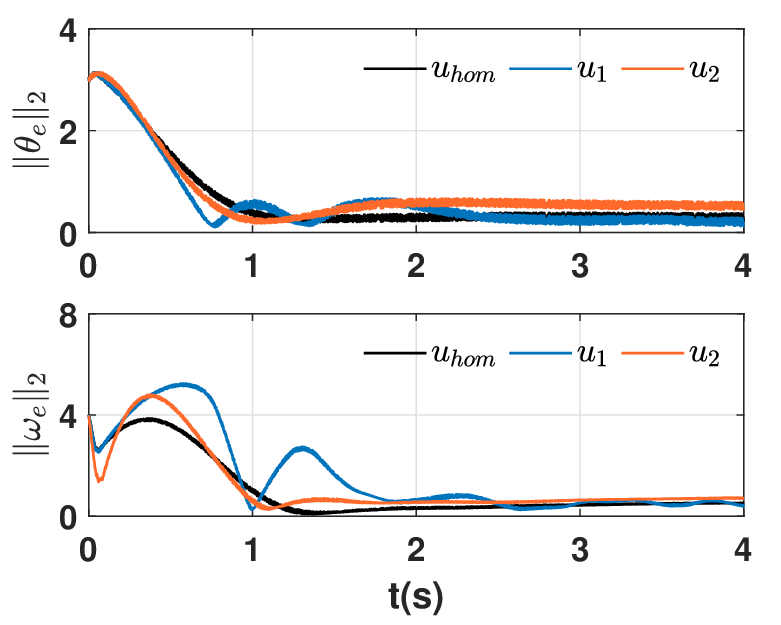}
	\caption{The norm of the tracking error (Case 2).}
	\label{fig:4}
\end{figure}

\begin{table}[h]
	\caption{The control energy $E$ for $t_f=4$.}
		\centering
		\begin{tabular}{ |p{0.5cm}|p{0.6cm}|p{0.6cm}|p{0.6cm}|p{0.6cm}|p{0.6cm}|p{0.6cm}|}		
		\hline
		& \multicolumn{3}{|c|}{Case 1} & \multicolumn{3}{|c|}{Case 2} \\
		\cline{2-7}
		& $u_{hom}$ & $u_1$ & $u_2$ & $u_{hom}$ & $u_1$ & $u_2$\\
		\hline
		$E$ & 0.104 & 0.134 & 0.153 & 0.146 & 0.211 & 0.190\\
		\hline
	\end{tabular}

\label{table:1}
\end{table}

\section{Conclusion}
In this paper, a global attitude tracking control for a fully-actuated rigid body is developed using various symmetries of the system. First, a rotation symmetry of the system is utilized in order to derive an impulsive linear system, which describes the rigid-body dynamics on the Lie algebra $\mathfrak{so}(3)$.  Next, a dilation symmetry (generalized homogeneity) is applied for finite/fixed-time controller design. The control law is formulated in terms of the canonical homogeneous norm being the {\Yu strict}  Lyapunov function of the closed-loop error system. For homogeneity-based methods, a global estimate of the settling time is provided for the first time. This estimate is exact, at least, for small initial errors.
Theoretical results are supported by numerical simulations.

\begin{ack}                               
This work is supported by CSC (China Scholarship Council) Grant 202006030019 and the National Natural Science
Foundation of China under Grant 62050410352.
\end{ack}

\section*{Appendix: $SO(3)$ group and $\mathfrak{so}(3)$ algebra}

Let us recall  briefly some results  (see, e.g.,  \cite{chirikjian2012information} for more details) about $SO(3)$  and $\mathfrak{so}(3)$.\vspace{-2mm}
\begin{equation}
	\begin{aligned}
		SO(3)&=\{R \in \mathbb{R}^{3 \times 3}: R R^{T}=I_3, \ \det(R)=1\},\\
		\mathfrak{so}(3)&=\{X\in\mathbb{R}^{3\times3} : X^{\top}=-X\}.
	\end{aligned}\vspace{-2mm}
\end{equation}
The  $\mathfrak{so}(3)$ consists of $3\times 3$ skew-matrices with the Lie bracket given by the commutator $[X_1,X_2]=X_1X_2-X_2X_1$, where $X_1,X_2\in 	\mathfrak{so}(3)$.
Since  any $X\in \mathfrak{so}(3)$ can be uniquely represented by $X=x^{\wedge}$ with $x\in\mathbb{R}^3$, then  $\mathfrak{so}(3)$ is isomorphic to $\R^3$.
 Notice also that for $x\in\R^3$ and $R\in SO(3)$ one holds $Rx^{\wedge}R^{\top}=(Rx)^{\wedge}$.
 The exponential map $\mathrm{exp}(\cdot)$ is a surjective map from $\mathfrak{so}(3)$ to  $SO(3)$. For
 $X\in \mathfrak{so}(3)$
the exponential map is given by 
$	\exp \left(X\right)=\sum_{n=0}^{\infty} \tfrac{1}{n !}\left(X\right)^{n}\in SO(3)
$.
Any  $X\in \mathfrak{so}(3)$ (resp. any vector $x\in \R^3 : X=x^{\wedge}$) uniquely defines a  $R=\exp(X)\in SO(3)$.
For $R\in SO(3) : \mathrm{trace}(R)\neq -1$, the exponential map  $\mathrm{exp}(\cdot)$ and its inverse map $\log(\cdot)$ can be defined as follows:\vspace{-2mm}
\begin{equation}\label{eq:exp_log}
		\begin{aligned}
			R=\exp(X)=&I_3\!+\!\tfrac{\sin|x|}{|x|}X\!+\! \tfrac{1-\cos|x|}{|x|^2}X^2 \\
			X=\log(R)=&\tfrac{\phi}{2\sin\phi}\left(R-R^{\top}\right) 
		\end{aligned}\vspace{-2mm}
\end{equation}
where  $\phi\in (-\pi, \pi)$ is a solution of the equation $1+2\cos\phi=\mathrm{trace}(R)$. 
For $\mathrm{trace}(R)=-1$ then there is an ambiguity in the definition of $X$, since in this case $X=(\pm \pi y){\tiny }^\wedge$, where  $y\in\mathbb{R}^3$ is a unique unit vector  ($y^{\top}y=1$) satisfying $Ry=y$.  
In fact,  $SO(3)$ group in the exponential coordinates has a correspondence with the closed ball $B^3[\pi]$. The  transformation $\mathrm{Log}(\cdot)=\log (\cdot^{\wedge}):\R^3\to SO(3)$ is a bijection on $B^3(\pi)$.  The ambiguity appear on the sphere $S^3(\pi)$. 
The left Jacobian of the group $SO(3)$ is\vspace{-2mm} \begin{equation}\label{eq:J_r}
	\begin{aligned}
		J_{\textit{r}}(x^\wedge) & =I-\tfrac{1-\cos |x|}{|x|^2}x^\wedge+\tfrac{|x|-\sin |x|}{|x|^3}(x^\wedge)^2, \\
		J_{\textit{r}}^{-1}(x^\wedge) & =I+\tfrac{1}{2}x^\wedge+\left(\tfrac{1}{|x|^2}-\tfrac{1+\cos |x|}{2|x| \sin |x|}\right)(x^\wedge)^2.
	\end{aligned}\vspace{-2mm}
\end{equation}
 Hence, $J_{\textit{r}}^{-1}$ bounded for $|\theta_e|\in[0,\pi]$.

\bibliographystyle{unsrtnat}        
\bibliography{bib_all} 

\begin{thebibliography}{52}
\providecommand{\natexlab}[1]{#1}
\providecommand{\url}[1]{\texttt{#1}}
\expandafter\ifx\csname urlstyle\endcsname\relax
  \providecommand{\doi}[1]{doi: #1}\else
  \providecommand{\doi}{doi: \begingroup \urlstyle{rm}\Url}\fi

\bibitem[Chaturvedi et~al.(2011)Chaturvedi, Sanyal, and
  McClamroch]{chaturvedi2011rigid}
N.~A Chaturvedi, A.~K Sanyal, and N~H. McClamroch.
\newblock Rigid-body attitude control.
\newblock \emph{IEEE control systems magazine}, 31\penalty0 (3):\penalty0
  30--51, 2011.

\bibitem[Wen and Kreutz-D(1991)]{wen1991attitude}
JT-Y Wen and K.~Kreutz-D.
\newblock The attitude control problem.
\newblock \emph{IEEE Transactions on Automatic control}, 36\penalty0
  (10):\penalty0 1148--1162, 1991.

\bibitem[Stuelpnagel(1964)]{stuelpnagel1964parametrization}
J.~Stuelpnagel.
\newblock On the parametrization of the three-dimensional rotation group.
\newblock \emph{SIAM review}, 6\penalty0 (4):\penalty0 422--430, 1964.

\bibitem[Bullo and Murray(1995)]{bullo1995proportional}
F.~Bullo and R~M Murray.
\newblock Proportional derivative (pd) control on the euclidean group.
\newblock 1995.

\bibitem[Mayhew and Teel(2013)]{mayhew2013synergistic}
C.~G Mayhew and A.~R Teel.
\newblock Synergistic hybrid feedback for global rigid-body attitude tracking
  on so(3).
\newblock \emph{IEEE Transactions on Automatic Control}, 58\penalty0
  (11):\penalty0 2730--2742, 2013.

\bibitem[Mayhew et~al.(2011)Mayhew, Sanfelice, and Teel]{mayhew2011quaternion}
C.~G Mayhew, R.~G Sanfelice, and A.~R Teel.
\newblock Quaternion-based hybrid control for robust global attitude tracking.
\newblock \emph{IEEE Transactions on Automatic control}, 56\penalty0
  (11):\penalty0 2555--2566, 2011.

\bibitem[Su and Cai(2011)]{su2011globally}
J.~Su and K.~Cai.
\newblock Globally stabilizing proportional-integral-derivative control laws
  for rigid-body attitude tracking.
\newblock \emph{Journal of Guidance, Control, and Dynamics}, 34\penalty0
  (4):\penalty0 1260--1264, 2011.

\bibitem[Lee(2015)]{lee2015global}
T.~Lee.
\newblock Global exponential attitude tracking controls on so(3).
\newblock \emph{IEEE Transactions on Automatic Control}, 60\penalty0
  (10):\penalty0 2837--2842, 2015.

\bibitem[Berkane et~al.(2017)Berkane, Abdessameud, and
  Tayebi]{berkane2017hybrid}
S.~Berkane, A.~Abdessameud, and A.~Tayebi.
\newblock Hybrid global exponential stabilization on so (3).
\newblock \emph{Automatica}, 81:\penalty0 279--285, 2017.

\bibitem[Bharadwaj et~al.(1998)Bharadwaj, Osipchuk, Mease, and
  Park]{bharadwaj1998geometry}
S~Bharadwaj, M~Osipchuk, KD~Mease, and FC~Park.
\newblock Geometry and inverse optimality in global attitude stabilization.
\newblock \emph{Journal of Guidance, Control, and Dynamics}, 21\penalty0
  (6):\penalty0 930--939, 1998.

\bibitem[Yu and Ding(2016)]{yu2016global}
Y.~Yu and X.~Ding.
\newblock A global tracking controller for underactuated aerial vehicles:
  design, analysis, and experimental tests on quadrotor.
\newblock \emph{IEEE/ASME Transactions on Mechatronics}, 21\penalty0
  (5):\penalty0 2499--2511, 2016.

\bibitem[Du et~al.(2011)Du, Li, and Qian]{du2011finite}
H.~Du, S.~Li, and C.~Qian.
\newblock Finite-time attitude tracking control of spacecraft with application
  to attitude synchronization.
\newblock \emph{IEEE Transactions on Automatic Control}, 56\penalty0
  (11):\penalty0 2711--2717, 2011.

\bibitem[Zou(2013)]{zou2013finite}
A.M Zou.
\newblock Finite-time output feedback attitude tracking control for rigid
  spacecraft.
\newblock \emph{IEEE Transactions on Control Systems Technology}, 22\penalty0
  (1):\penalty0 338--345, 2013.

\bibitem[Huang et~al.(2018)Huang, Li, Guo, and Wang]{Bing_Acta_2018}
B.~Huang, A.~Li, Y.~Guo, and C.~Wang.
\newblock Fixed-time attitude tracking control for spacecraft without
  unwinding.
\newblock \emph{Acta Astronautica}, 151:\penalty0 818--827, 2018.

\bibitem[Chen et~al.(2018)Chen, Xie, Sun, and He]{Chen_TAES_2018}
Q.~Chen, S.~Xie, M.~Sun, and X.~He.
\newblock Adaptive nonsingular fixed-time attitude stabilization of uncertain
  spacecraft.
\newblock \emph{IEEE Transactions on Aerospace and Electronic Systems},
  54\penalty0 (6):\penalty0 2937--2950, 2018.

\bibitem[Zou et~al.(2020)Zou, Kumar, and de~Ruiter]{zou2020fixed}
A.~Zou, K.~D Kumar, and A.~HJ de~Ruiter.
\newblock Fixed-time attitude tracking control for rigid spacecraft.
\newblock \emph{Automatica}, 113:\penalty0 108792, 2020.

\bibitem[Guo et~al.(2019)Guo, Huang, Song, Li, and Wang]{guo2019robust}
Y.~Guo, B.~Huang, S.~Song, A.~Li, and C.~Wang.
\newblock Robust saturated finite-time attitude control for spacecraft using
  integral sliding mode.
\newblock \emph{Journal of Guidance, Control, and Dynamics}, 42\penalty0
  (2):\penalty0 440--446, 2019.

\bibitem[Gui et~al.(2015)Gui, Jin, and Xu]{gui2015simple}
H.~Gui, L.~Jin, and S.~Xu.
\newblock Simple finite-time attitude stabilization laws for rigid spacecraft
  with bounded inputs.
\newblock \emph{Aerospace Science and Technology}, 42:\penalty0 176--186, 2015.

\bibitem[Du and Li(2012)]{du2012finite}
H.~Du and S.~Li.
\newblock Finite-time attitude stabilization for a spacecraft using homogeneous
  method.
\newblock \emph{Journal of Guidance, Control, and Dynamics}, 35\penalty0
  (3):\penalty0 740--748, 2012.

\bibitem[Gui and Vukovich(2016)]{gui2016global}
H.~Gui and G.~Vukovich.
\newblock Global finite-time attitude tracking via quaternion feedback.
\newblock \emph{Systems \& Control Letters}, 97:\penalty0 176--183, 2016.

\bibitem[Shi et~al.(2018)Shi, Zhou, and Zhou]{shi2018global}
X.~Shi, Z.~Zhou, and D.~Zhou.
\newblock Global finite-time attitude stabilization for rigid spacecraft in the
  exponential coordinates.
\newblock \emph{International Journal of Control}, 91\penalty0 (6):\penalty0
  1325--1337, 2018.

\bibitem[Zhu et~al.(2011)Zhu, Xia, and Fu]{zhu2011attitude}
Z.~Zhu, Y.~Xia, and M.~Fu.
\newblock Attitude stabilization of rigid spacecraft with finite-time
  convergence.
\newblock \emph{International Journal of Robust and Nonlinear Control},
  21\penalty0 (6):\penalty0 686--702, 2011.

\bibitem[Tiwari et~al.(2015)Tiwari, Janardhanan, and Nabi]{tiwari2015rigid}
P~M. Tiwari, S.~Janardhanan, and M.~Nabi.
\newblock Rigid spacecraft attitude control using adaptive integral second
  order sliding mode.
\newblock \emph{Aerospace Science and Technology}, 42:\penalty0 50--57, 2015.

\bibitem[Shi et~al.(2017)Shi, Zhang, and Zhou]{shi2017almost}
X.~Shi, Y.~Zhang, and D.~Zhou.
\newblock Almost-global finite-time trajectory tracking control for quadrotors
  in the exponential coordinates.
\newblock \emph{IEEE Transactions on Aerospace and Electronic Systems},
  53\penalty0 (1):\penalty0 91--100, 2017.

\bibitem[Hong et~al.(2001)Hong, Huang, and Xu]{hong2001output}
Y.~Hong, J.~Huang, and Y.~Xu.
\newblock On an output feedback finite-time stabilization problem.
\newblock \emph{IEEE Transactions on Automatic Control}, 46\penalty0
  (2):\penalty0 305--309, 2001.

\bibitem[Bhat and Bernstein(2005)]{BhatBernstein2005:MCSS}
S.~P. Bhat and D.~S. Bernstein.
\newblock Geometric homogeneity with applications to finite-time stability.
\newblock \emph{Mathematics of Control, Signals and Systems}, 17:\penalty0
  101--127, 2005.

\bibitem[Andrieu et~al.(2008)Andrieu, Praly, and
  Astolfi]{Andrieu_etal2008:SIAM_JCO}
V.~Andrieu, L.~Praly, and A.~Astolfi.
\newblock {Homogeneous Approximation, Recursive Observer Design, and Output
  Feedback}.
\newblock \emph{SIAM Journal of Control and Optimization}, 47\penalty0
  (4):\penalty0 1814--1850, 2008.

\bibitem[Zubov(1958)]{Zubov1958:IVM}
V.I. Zubov.
\newblock On systems of ordinary differential equations with generalized
  homogenous right-hand sides.
\newblock \emph{Izvestia vuzov. Mathematica (in Russian)}, 1:\penalty0 80--88,
  1958.

\bibitem[Rosier(1992)]{Rosier1992:SCL}
L.~Rosier.
\newblock Homogeneous {L}yapunov function for homogeneous continuous vector
  field.
\newblock \emph{Systems \& Control Letters}, 19:\penalty0 467--473, 1992.

\bibitem[Hong(2001)]{Hong2001:Aut}
Y.~Hong.
\newblock H$_{\infty}$ control, stabilization, and input-output stability of
  nonlinear systems with homogeneous properties.
\newblock \emph{Automatica}, 37\penalty0 (7):\penalty0 819--829, 2001.

\bibitem[Polyakov(2020)]{Polyakov2020:Book}
A.~Polyakov.
\newblock \emph{Generalized Homogeneity in Systems and Control}.
\newblock Springer, 2020.

\bibitem[Polyakov and Krstic(2023)]{PolyakovKrstic2023:TAC}
A.~Polyakov and M.~Krstic.
\newblock Finite-and fixed-time nonovershooting stabilizers and safety filters
  by homogeneous feedback.
\newblock \emph{IEEETransactions on Automatic Control}, 2023.

\bibitem[Tuna and Teel(2006)]{TunaTeel2006:CDC}
S~E. Tuna and A.~R Teel.
\newblock Homogeneous hybrid systems and a converse lyapunov theorem.
\newblock In \emph{Proceedings of the 45th IEEE Conference on Decision and
  Control}, pages 6235--6240. IEEE, 2006.

\bibitem[Goebel and Teel(2008)]{Goebel_CDC2008}
R.~Goebel and A~R. Teel.
\newblock Zeno behavior in homogeneous hybrid systems.
\newblock In \emph{2008 47th IEEE Conference on Decision and Control}, pages
  2758--2763, 2008.

\bibitem[Forni and Teel(2010)]{forni2010stability}
F.~Forni and A~R Teel.
\newblock Stability for a class of homogeneous hybrid systems by annular
  lyapunov analysis.
\newblock In \emph{49th IEEE Conference on Decision and Control (CDC)}, pages
  3289--3294. IEEE, 2010.

\bibitem[Polyakov et~al.(2015)Polyakov, Efimov, and
  Perruquetti]{Polyakov_etal2015:Aut}
A.~Polyakov, D.~Efimov, and W.~Perruquetti.
\newblock {Finite-time and fixed-time stabilization: Implicit Lyapunov function
  approach}.
\newblock \emph{Automatica}, 51\penalty0 (1):\penalty0 332--340, 2015.

\bibitem[Adamy and Flemming(2004)]{AdamyFlemming2004:Aut}
J.~Adamy and A.~Flemming.
\newblock Soft variable-structure controls: a survey.
\newblock \emph{Automatica}, 40:\penalty0 1821--1844, 2004.

\bibitem[Korobov(1979)]{Korobov1979:DAN}
V.I. Korobov.
\newblock A solution of the synthesis problem using controlability function.
\newblock \emph{Doklady Academii Nauk SSSR}, 248:\penalty0 1051--1063, 1979.

\bibitem[Zimenko et~al.(2020)Zimenko, Polyakov, Efimov, and
  Perruquetti]{Zimenko_etal2020:TAC}
K.~Zimenko, A.~Polyakov, D.~Efimov, and W.~Perruquetti.
\newblock Robust feedback stabilization of linear mimo systems using
  generalized homogenization.
\newblock \emph{IEEE Transactions on Automatic Control}, 65\penalty0
  (12):\penalty0 5429--5436, 2020.

\bibitem[Zhou et~al.(2022)Zhou, Polyakov, and Zheng]{YuECC2022}
Y.~Zhou, A.~Polyakov, and G.~Zheng.
\newblock Homogeneous finite-time tracking control on lie algebra so(3).
\newblock \emph{European Control Conference, London, UK}, 2022.

\bibitem[Khomenuk(1961)]{Khomenuk1961:IVM}
V.~V. Khomenuk.
\newblock On systems of ordinary differential equations with generalized
  homogenous right-hand sides.
\newblock \emph{Izvestia vuzov. Mathematica (in Russian)}, 3(22):\penalty0
  157--164, 1961.

\bibitem[Kawski(1991)]{Kawski1991:ACDS}
M.~Kawski.
\newblock Families of dilations and asymptotic stability.
\newblock \emph{Analysis of Controlled Dynamical Systems}, pages 285--294,
  1991.

\bibitem[Fischer and Ruzhansky(2016)]{FischerRuzhansky2016:Book}
V.~Fischer and M.~Ruzhansky.
\newblock \emph{{Quantization on Nilpotent Lie Groups}}.
\newblock Springer, 2016.

\bibitem[Husch(1970)]{Husch1970:Math_Ann}
L.S. Husch.
\newblock {Topological Characterization of The Dilation and The Translation in
  Frechet Spaces}.
\newblock \emph{Mathematical Annals}, 190:\penalty0 1--5, 1970.

\bibitem[Gr{\"u}ne(2000)]{Grune2000:SIAM_JCO}
L.~Gr{\"u}ne.
\newblock Homogeneous state feedback stabilization of homogeneous systems.
\newblock \emph{SIAM Journal of Control and Optimization}, 38\penalty0
  (4):\penalty0 1288--1308, 2000.

\bibitem[Hespanha et~al.(2008)Hespanha, Liberzon, and
  Teel]{Hespanha_etal2008:Aut}
J.~Hespanha, D.~Liberzon, and A.~Teel.
\newblock Lyapunov conditions for input-to-state stability of impulsive
  systems.
\newblock \emph{Automatica}, 44:\penalty0 2735--2744, 2008.

\bibitem[Chirikjian(2012)]{chirikjian2012information}
G.~S Chirikjian.
\newblock Information theory on lie groups.
\newblock In \emph{Stochastic Models, Information Theory, and Lie Groups,
  Volume 2}, pages 337--359. Springer, 2012.

\bibitem[Sontag(1989)]{Sontag1989:TAC}
E.D. Sontag.
\newblock Smooth stabilization implies coprime factorization.
\newblock \emph{IEEE Transactions on Automatic Control}, 34:\penalty0 435--443,
  1989.

\bibitem[Hong et~al.(2010)Hong, Jiang, and Feng]{Hong_etal2010:SIAM}
Y.~Hong, Z-P Jiang, and G.~Feng.
\newblock Finite-time input-to-state stability and applications to finite-time
  control design.
\newblock \emph{SIAM Journal on Control and Optimization}, 48\penalty0
  (7):\penalty0 4395--4418, 2010.

\bibitem[Lopez-Ramirez et~al.(2020)Lopez-Ramirez, Efimov, Polyakov, and
  Perruquetti]{Lopez-Ramirez_etal2020:SCL}
F.~Lopez-Ramirez, D.~Efimov, A.~Polyakov, and W.~Perruquetti.
\newblock Finite-time and fixed-time input-to-state stability: Explicit and
  implicit approaches.
\newblock \emph{Systems \& Control Letters}, 144:\penalty0 104775, 2020.

\bibitem[Aleksandrov et~al.(2022)Aleksandrov, Efimov, and
  Dashkovskiy]{Aleksandrov_etal2022:MCSS}
A.~Aleksandrov, D.~Efimov, and S.~Dashkovskiy.
\newblock Design of finite-/fixed-time iss-lyapunov functions for mechanical
  systems.
\newblock \emph{Mathematics of Control, Signals, and Systems}, pages 1--21,
  2022.

\bibitem[Sontag and Wang(1995)]{SontagWang1996:SCL}
E.D. Sontag and Y.~Wang.
\newblock On characterizations of the input-to-state stability property.
\newblock \emph{Systems \& Control Letters}, 24\penalty0 (5):\penalty0
  351--359, 1995.

\end{thebibliography}

\end{document}